\colorlet{green}{green!50!black}
\newcommand{\myproblem}[1]{{\textsc{#1}}} 
\newcommand{\ildicom}[1]{\todo[color=teal!50!white]{I: #1}}
\def\bp{\textup{bp}}
\def\AA{\widetilde{A}}
\def\BB{\widetilde{B}}
\def\CC{\widetilde{C}}
\def\MM{\widetilde{M}}
\def\F'{\widetilde{F}}
\def\GG{\widetilde{G}}
\def\TT{\texttt{t}}
\def\FF{\texttt{f}}
\def\FUP{\myproblem{Popularity with Fixed Instability}}
\def\FUPT{\myproblem{Popularity with Fixed Instability and Ties}}
\def\MAXUP{\myproblem{Popularity with Bounded Instability}}
\def\MUPMU{\myproblem{Max-Utility Popular Matching with Instability Costs}}
\def\MUPMUT{\myproblem{Max-Utility Popular Matching with Instability Costs and Ties}}
\def\MUPOMU{\myproblem{Max-Utility Pareto-Optimal Matching with Instability Costs}}
\newtheorem{observation}[theorem]{Observation}
\newtheorem{myremark}{Remark}
\let\oldremark\myremark
\renewcommand{\myremark}{\oldremark\normalfont}
\newtheorem{myexample}{Example}
\let\oldexample\myexample
\renewcommand{\myexample}{\oldexample\normalfont}
\newcommand{\repeattheorem}[1]{%
  \begingroup
  \renewcommand{\thetheorem}{\ref{#1}}%
  \expandafter\expandafter\expandafter\theorem
  \csname reptheorem@#1\endcsname
  \endtheorem
  \endgroup
}
\xdef\csname reptheorem@#1\endcsname{%
    \unexpanded\expandafter{\BODY}%
  }%
\unskip\label{#1}\endtheorem
\newcommand{\repeatlemma}[1]{%
  \begingroup
  \renewcommand{\thelemma}{\ref{#1}}%
  \expandafter\expandafter\expandafter\lemma
  \csname replemma@#1\endcsname
  \endlemma
  \endgroup
}
\xdef\csname replemma@#1\endcsname{%
    \unexpanded\expandafter{\BODY}%
  }%
\unskip\label{#1}\endlemma
\newcommand{\repeatcorollary}[1]{%
  \begingroup
  \renewcommand{\thecorollary}{\ref{#1}}%
  \expandafter\expandafter\expandafter\corollary
  \csname repcorollary@#1\endcsname
  \endcorollary
  \endgroup
}
\xdef\csname repcorollary@#1\endcsname{%
    \unexpanded\expandafter{\BODY}%
  }%
\unskip\label{#1}\endcorollary
\renewcommand\paragraph{\@startsection{paragraph}{4}{\z@}%
                                    {0.6ex \@plus1ex \@minus.2ex}%
                                    {-1em}%
                                   {\normalfont\normalsize\bfseries}} 
\title{Maximum-utility popular matchings with bounded instability
}
\author{Ildik\'{o} Schlotter\inst{1,2} \and \'{A}gnes Cseh\inst{1}}
\institute{
 Centre for Economic and Regional Studies, Hungary
\email{\{schlotter.ildiko,cseh.agnes\}@krtk.hu}
\and
Budapest University of Technology and Economics,  Hungary
}
\begin{document}

\maketitle
\setcounter{footnote}{0}

\begin{abstract}
    In a graph where vertices have preferences over their neighbors, a matching is called popular if it does not lose a head-to-head election against any other matching when the vertices vote between the matchings. Popular matchings can be seen as an intermediate category between stable matchings and maximum-size matchings. In this paper, we aim to maximize the utility of a matching that is popular but admits only a few blocking edges. 
    
    For general graphs already finding a popular matching with at most one blocking edge is \NP-complete. For bipartite instances, we study the problem of finding a maximum-utility popular matching with a bound on the number (or more generally, the cost) of blocking edges applying a multivariate approach. We show classical and parameterized hardness results for severely restricted instances. By contrast, we design an algorithm for instances where preferences on one side admit a master list, and show that this algorithm is optimal.
\end{abstract}

\section{Introduction} 

In the classic stable matching problem, we are given a bipartite graph, where the two sets of vertices represent two agent sets. Each agent has a strictly ordered preference list over their possible partners from the other agent set. A matching is \emph{stable} if it is not \emph{blocked} by any edge, that is, no pair of agents exists who are mutually inclined to abandon their partners for each other. The existence of stable matchings was shown in the seminal paper of Gale and Shapley~\cite{GS62}. The optimality notion was later extended to various other input settings.

In the area of matchings under preferences, the trade-off between stability and size (or utility) has been an actively investigated topic~\cite{BMM10,FKPS10,CHS+18,ABG+20,GJR+20}. The most extensively studied compromise between these optimality criteria might be the notion of popular matchings, first defined by G\"ardenfors~\cite{Gar75}. 
Matching~$M$ is more popular than another matching $M'$ if the number of vertices preferring $M$ to $M'$ is larger than the number of vertices preferring $M'$ to~$M$. A matching~$M$ is called \emph{popular} if there is no matching~$M'$ that is more popular than~$M$. 
In bipartite graphs, all stable matchings have the same size, at least~$\frac{1}{2} |M_{\text{max}}|$ where $M_{\text{max}}$ is a maximum-size matching. Stable matchings are minimum-size popular matchings~\cite{Gar75,HK13}, and maximum-size popular matchings have size at least~$\frac{2}{3} |M_{\text{max}}|$~\cite{Kav14}. In non-bipartite graphs
it is \NP-complete to decide whether a popular matching even exists~\cite{FKPZ19,GMSZ21}.
 If edges have utilities in~$\{1,2\}$, computing a popular matching of maximum utility is \NP-hard even in bipartite instances~\cite{FKPZ19}. The difference between a maximum-utility stable and a maximum-utility popular matching can be arbitrarily large in terms of the objective.


In this paper, we concentrate on the stability versus maximum utility question within the set of popular matchings. We investigate \emph{popular matchings with bounded instability}, that is, matchings that are popular but admit only a few blocking edges. Can we find popular matchings with, say, at most one blocking edge efficiently? If the given instance admits a stable matching (which is always the case for bipartite graphs), the answer is obviously yes, since all stable matchings are popular. But can we efficiently find a maximum-utility matching among all popular matchings with at most one blocking edge? 
How far can we venture into the realm of instability, if we want to keep our ability to find maximum-utility matchings efficiently?

In order to examine such questions in detail, we use a general model where with each edge we associate a \emph{utility} and a \emph{cost} as well, and we ask for a popular matching whose total utility is above a given threshold, while its blocking edges have total cost not exceeding a given budget. This setting can be interpreted as follows. Vertices are agents whose partnership brings as much profit to a central authority as the utility of the edge connecting them. The cost of an edge is the regret of the agents if the edge blocks the matching. The central authority disposes of limited resources to compensate agents who could be better off by switching to blocking edges. Up to this limit, blocking edges can be paid for and thus tolerated. The goal is to find a matching whose utility reaches our target, while ensuring that agents can be compensated from the cost budget. 

Since the question is computationally intractable in such a general form as we will see in Section~\ref{sec:prelim}, we 
apply the framework of parameterized complexity~\cite{df99}, and take a multivariate approach~\cite{Nie10} in order to understand how exactly the several parameters appearing in such an instance contribute to its intractability, and to identify cases that can be solved efficiently. Apart from the several natural parameters we can associate with the problem (such as our objective value or our budget), we also investigate various restrictions on the preference profiles and how they influence the computational complexity of the question.

\subsection{Related work}
We briefly review some classical and parameterized complexity results about stable and popular matchings. Then we elaborate on the versions of these problems with edge utilities, and finally, we discuss the relaxation of stability.
 
\paragraph{Complexity results for stable and popular matchings.}
From the seminal paper of Gale and Shapley~\cite{GS62} we know that in bipartite instances a stable matching always exists and can be found in linear time. 
If the graph is not bipartite, then the existence of a stable solution is not guaranteed.
However, Irving's linear-time algorithm finds a stable matching or reports that none exists~\cite{Irv85}. 

For bipartite instances, it was already noticed by Gärdenfors~\cite{Gar75} that all stable matchings are popular, which implies that in bipartite instances popular matchings always exist. In fact stable matchings are smallest-size popular matchings,
as shown by Biró et al.~\cite{BIM10}; maximum-size popular matchings can be found in polynomial time as well~\cite{HK13,Kav14}. Only recently Faenza et al.~\cite{FKPZ19} and Gupta et al.~\cite{GMSZ21} resolved the long-standing open question on the complexity of deciding whether a popular matching exists in a 
non-bipartite instance and showed that the problem is $\NP$-complete. 

Stable matchings have been studied extensively from a parameterized viewpoint. For an overview, please consult the survey by Chen~\cite{Che19}.
%
Only a few results consider popularity within the parameterized framework, and most of them in the context of the house allocation problem~\cite{CFP21,KKM+22}.

There is a large set of results on stable and popular matchings in instances where preferences admit a master list (see Section~\ref{sec:prelim} for a definition)~\cite{KNN14,Kam19,BHK+20,MR20}.
Master lists naturally occur in a number of applications, such as P2P networks~\cite{GLMMRV07,LMVGRM07}, job markets~\cite{IMS08}, and student housing assignments~\cite{PPR08}. 

\paragraph{Stable matchings with edge utilities.}
For bipartite instances, Irving et al.~\cite{ILG87} proposed an algorithm for finding a maximum-utility stable matching in~$O(n^4 \log n)$ time for an $n$-vertex graph; see also~\cite{GI89}. 
For non-negative integer edge utilities that satisfy a certain monotonicity requirement, 
the fastest known algorithm is due to
Feder~\cite{Fed92,Fed94}, running in 
$O(n^{2} \log({\frac{K}{n^2}}+2))\! \cdot \! \min{\{n, \sqrt{K}\}}$ time where $K$ is the utility of an optimal solution. A maximum-utility stable matching can also be computed using a 
simple and elegant formulation of the stable matching polytope~\cite{Rot92}. For the non-bipartite case, finding a maximum-utility stable matching is $\NP$-hard, but 2-approximable under certain monotonicity constraints using LP methods~\cite{TS97,TS98}.
\paragraph{Popular matchings with edge utilities.}
For bipartite instances, Faenza et al.~\cite{FK21} showed that it is \NP-complete to decide if there exists a popular matching that contains two given edges, which is a very restricted case of popular matchings with edge utilities. The same authors provided a 2-approximation algorithm for non-negative edge utilities. \NP-hardness was established for non-bipartite instances with edge utilities a couple of years earlier already~\cite{HK21}.

\paragraph{Almost stable matchings.} Non-bipartite stable matching instances need not admit a stable solution. The number of blocking edges is a characteristic property of every matching. The set of edges blocking $M$ is denoted by~$\bp(M)$. A natural goal is to find a matching minimizing~$|\bp(M)|$---such a matching is called \emph{almost stable}. This approach has a broad literature: almost stable matchings have been investigated in bipartite~\cite{KMV94,HIM09,BMM10,GJR+20} and non-bipartite instances~\cite{ABM06,BMM12,CHS+18,CIM19}.
Closest to our work is the paper by Gupta et al.~\cite{GJR+20} studying the trade-off between size and stability
from a parameterized complexity viewpoint.

\subsection{Our results and structure of the paper}

Using a multivariate approach, we gain insight into the computational complexity of finding a maximum-utility popular matching respecting a bound on the cost of its blocking edges.
We draw 
a detailed map of the problem's complexity in terms of parameters such as the cost budget~$k$, the desired utility value~$t$, the form of the cost and utility functions (e.g., being binary or uniform), and the structural properties of the preference profile; see Table~\ref{tab:summary} for a detailed summary. The proof of all results marked by an asterisk~($\star$) can be found in Appendix~\ref{sec:appendix}. Whenever possible, we provide a high-level proof sketch in the body of the paper instead.

\paragraph{Sections~\ref{sec:prelim} and \ref{sec:hardness}.}
We first define our model, optimality notions, and problems.
In Observation~\ref{thm:generalG-k=1} we prove that for general graphs, already finding \emph{any} popular matching with at most one blocking edge is \NP-complete. We thus restrict ourselves to the bipartite case and show in Theorems~\ref{thm:k=1-bounded} and~\ref{thm:k=1-maxsize-bounded} that finding a maximum-utility popular matching with at most one blocking edge is \NP-complete even for highly restricted inputs. 

\paragraph{Section~\ref{sec:masterlist}.}
To contrast these strong intractability results, we next focus on the ``tractability island'' of bipartite instances that admit a master list on one side. We propose a simple algorithm that finds a maximum-utility popular matching whose blocking edges have total cost at most~$k$, given a positive integer cost function on the edges. Our algorithm runs in time $O(|E|^k)$ where $E$ is the edge set of the input graph (Theorem~\ref{thm:mupmu-master}). This running time is tight in the sense that the problem is $\mathsf{W}[1]$-hard with parameter $k$ (Theorem~\ref{thm:masterlist-W1-hard}). We show our algorithm's optimality also in the sense that the few assumptions we have on the input (besides admitting a master list), namely that preferences are strict and  edges have positive integer costs, are necessary: allowing for ties or for zero-cost edges undermines the tractability of the problem (Theorems~\ref{thm:sociallystable} and~\ref{thm:masterlists-ties}).

\paragraph{Section~\ref{sec:pareto}.}
We close our investigations with relaxing the requirement of popularity, and focusing on the less restrictive requirement of
Pareto-optimality instead. Hence, we ask the following question: can we efficiently find a maximum-utility Pareto-optimal matching with only a few blocking edges or, more generally, one whose blocking edges have  total cost not exceeding a given budget? In Theorem~\ref{thm:t1-xpalgo} we propose an algorithm for this problem that runs in $2^{O(k \log k)}|E|^{k+2}$ time, for a positive integer  cost function on the edge set~$E$ and a budget~$k$. We also prove that this algorithm is essentially optimal (Corollary~\ref{cor:mupomu-t1}).


\section{Preliminaries}
\label{sec:prelim}

We first introduce our model and the most important concepts in Section~\ref{sec:notation}, and 
then state our problem definitions and start our investigations in Section~\ref{sec:problemdef}. 

\subsection{Model and notation}
\label{sec:notation}


\paragraph{Graphs.} 
For a graph $G=(V,E)$, we let $V(G)$ and $E(G)$ denote its vertex and edge set, respectively. If $F \subseteq E$, then $V(F)$ is the set of all endpoints in~$F$. All our graphs are simple (without loops or parallel edges). For a vertex $v \in V$, $N_G(v)$ denotes the set of its \emph{neighbors} and $\delta_G(v) = |N_G(v)|$ its \emph{degree} in~$G$. The maximum degree of $G$ is $\Delta_G = \max_{v \in V(G)} {\delta_G(v)}$.
Two edges are \emph{adjacent}, if they share an endpoint.
A \emph{matching} in $G$ is a set of edges such that no two of them are adjacent. For a matching~$M$ and an edge~$(a,b) \in M$, we let $M(a)=b$, and conversely, $M(b)=a$. 
For a set~$X$ of edges or vertices in~$G$, let~$G-X$ be the subgraph of~$G$ obtained by deleting~$X$ from~$G$; for a singleton $X=\{x\}$ we may simply write $G-x$. 
We also let $G[X]=G-(V(G) \setminus X)$ for some $X \subseteq V(G)$.

\paragraph{Preference systems.}
A \emph{preference system} is a pair $(G,\preceq)$ where $G=(V,E)$ is a graph and $\preceq$ is a collection of preference orders $\preceq_v$ for each $v \in V$, where $\preceq_v$ can be a strict or a weak linear order over $N(v)$. 
We let $\prec_v$ be the strict part of~$\preceq_v$, i.e., $a \prec_v b$ means $b \not\preceq_v a$. We say that $v$ \emph{prefers}~$a$ to~$b$ if~$b \prec_v a$, and $v$ \emph{weakly prefers}~$a$ to~$b$ if $b \preceq_v a$.
Mostly we will deal with \emph{strict preference systems} where $\preceq_v$ 
is a strict linear order for each $v \in V$; in this case we write~$(G,\prec)$. 
We say that $(G,\preceq)$ is \emph{complete}, if $G$ is a complete graph or, if we assume a bipartite setting, where $G$ is a complete bipartite graph.
We will say that $(G,\preceq)$ is \emph{compatible} with a complete preference system $(G',\preceq')$ if $G$ is a subgraph of $G'$, and for any $v \in V(G)$, the restriction of $\preceq'_v$ to $N_G(v)$ is exactly~$\preceq_v$.
For a set~$X$ of edges or vertices, we define the deletion of~$X$ from~$(G,\preceq)$ as $(G-X,\preceq^{G-X})$ where $\preceq^{G-X}$ is the restriction of~$\preceq$ to~$G-X$, containing for each~$v \in V(G-X)$ a preference order~$\preceq^{G-X}_v$ over~$N_{G-X}(v)$.

\paragraph{Stability, popularity, Pareto-optimality.}
Given a preference system \mbox{$(G,\preceq)$} and a matching~$M$ in~$G$, some~$(a,b) \in E$ is a \emph{blocking edge for $M$} 
if $a$ is unmatched or prefers~$b$ to~$M(a)$, and $b$ is unmatched or prefers~$a$ to~$M(b)$; we denote by~$\bp_G(M)$ the set of blocking edges for~$M$ in~$G$.
If~$G$ is clear from the context, we may omit the subscript (also from notations $\delta_G(\cdot)$ or $N_G(\cdot)$).
We say that $M$ is \emph{stable} in $G$ if $\bp_G(M)=\emptyset$.\footnote{When $(G,\preceq)$ is not strict,  stability as we define it is often called \emph{weak stability}. 
See the book~\cite{Man13} for other stability notions for weakly ordered preferences.}

For two matchings~$M$ and~$M'$ in~$G$, some vertex $v$ \emph{prefers}~$M$ over~$M'$, if either $v$ is matched in~$M$ but unmatched in~$M'$, or $M'(v) \prec_v M(v)$.
We say that $M$ is \emph{more popular} than~$M'$, if more vertices prefer~$M$ to~$M'$ than vice versa. A matching~$M$ is \emph{popular}, if no matching is more popular than~$M$. 

A \emph{Pareto-improvement} of a matching $M$ is a matching~$M'$ such that no vertex prefers $M$ to $M'$, and at least one prefers $M'$ to $M$.
A matching $M$ is \emph{Pareto-optimal},
if there is no Pareto-improvement for it.
Although stability, popularity and Pareto-optimality are defined in the context of a preference system~$(G,\preceq)$, when~$\preceq$ is clear from the context, we may simply say that a matching is stable, popular, or Pareto-optimal in~$G$. Notice that for strict preference systems, stable matchings are popular, and popular matchings are Pareto-optimal.

\paragraph{Structured preferences.}
In a bipartite preference system \mbox{$(G=(A,B;E),\preceq)$}, 
a \emph{master list} over vertices of~$A$ is defined as an ordering~$\mathcal{L}_A$ of all vertices in~$A$ such that for any~$b \in B$, restricting~$\mathcal{L}_A$ to $N_G(b)$ yields exactly the ordering~$\preceq_b$.
We say that $(G,\preceq)$ \emph{admits a master list on one side}, if there exists a master list over either~$A$ or~$B$; if both holds, then $(G,\preceq)$ \emph{admits a master list on both sides}.

Single-peaked and single-crossing preferences have been defined for certain stable matching problems as well~\cite{BM86,BCF+20}, but they originate from problems in the context of elections, where preferences are complete linear orders.
We follow standard definitions from the social choice literature 
that adapt these notions to incomplete preferences~\cite{EFLO15,BCFN20,FL20}.
For simplicity, let us assume that $(G,\prec)$ is a strict preference system. 
Then $(G,\prec)$ has \emph{single-peaked} preferences, if there exists a strict linear ordering~$\triangleright$ of all vertices in $V(G)$ called an \emph{axis} such that for any vertex $v \in V(G)$ and
for every $a,b,c \in N_G(v)$ with $a \triangleright b \triangleright c$, the relation $b \prec_v a$ implies $c \prec_v b$.
In this case we also say that $(G,\prec)$ is \emph{single-peaked with respect to the axis~$\triangleright$}.
If $G=(A,B;E)$ is bipartite, then it suffices to provide a suitable axis for $A$ and for~$B$ separately. 

To define single-crossing preferences, let us first assume that $(G,\prec)$ is a strict and complete preference system. For any~$a,b \in V(G)$ let $V^{a \prec b}=\{v \in V(G): a \prec_v b\}$ denote the set of vertices preferring~$b$ to~$a$. 
We say that $(G,\prec)$ is \emph{single-crossing with respect to a strict linear ordering~$\triangleright$} of~$V(G)$, if for any $a,b \in V(G)$ either all vertices in~$V^{a \prec b}$ precede all vertices in~$V^{b \prec a}$ according to~$\triangleright$, or just the opposite, all vertices in~$V^{b \prec a}$ precede all vertices in~$V^{a \prec b}$ according to~$\triangleright$. 
We say that $(G,\prec)$ is \emph{single-crossing}, if it is single-crossing with respect to some strict linear ordering of~$V(G)$.
An incomplete strict preference system is \emph{single-crossing}, if it is compatible with a complete single-crossing preference system. 
Note that if~$G$ is bipartite, then it suffices to provide a complete bipartite preference system compatible with $(G,\prec)$ and separate linear orders for~$A$ and for~$B$. 

\paragraph{Classical and parameterized complexity.} 
We assume that the reader is familiar with basic notions and techniques of classical and parameterized complexity theory; 
for an introduction or for definitions we refer to the books~\cite{GJ79,CyganEtAl2015,Nie-book}.

\subsection{Problem definitions and initial results}
\label{sec:problemdef}
Let us now formally define the problem whose computational complexity is the main focus of our  paper.

\begin{center}
\fbox{ 
\parbox{14.5cm}{
\begin{tabular}{l}\MUPMU{}:  \end{tabular} \\
\begin{tabular}{p{1.8cm}p{12.0cm}}
Input: & A strict preference system $(G,\prec)$, a
utility function~$\omega:E(G) \rightarrow \mathbb{N}$, 
a cost function~$c:E(G) \rightarrow \mathbb{N}$, an objective value~$t \in \mathbb{N}$, and a budget~$k \in \mathbb{N}$. \\
Question: & Is there a popular matching in $G$ whose utility is at least~$t$ and whose blocking edges have  total cost at most~$k$?
\end{tabular}
}}
\end{center}

For a set $F \subseteq E$ of edges in~$G$, let 
$\omega(F)=\sum_{e \in F} \omega(e)$ and $c(F)=\sum_{e \in F} c(e)$ be its utility and cost, respectively.
A matching~$M$ in~$G$ is \emph{feasible}, if both $\omega(M) \geq t$ and $c(M) \leq k$ hold.

A very natural special case of 
the above problem is when we simply  limit the number of blocking edges: this amounts to setting all edge costs to~$1$. Hence, we  are looking for a popular matching (in the hope of finding matchings with greater utility or size when compared to stable matchings) while also setting an upper bound on the instability of the matching. 
\begin{center}
\fbox{ 
\parbox{14.5cm}{
\begin{tabular}{l}\MAXUP{}:  \end{tabular} \\
\begin{tabular}{p{1.8cm}p{12.0cm}}
Input: & A strict preference system $(G,\prec)$ and an integer~$k$. \\
Question: & Is there a popular matching $M$ in $G$ with $|\bp_G(M)| \leq k$?
\end{tabular}
}}
\end{center}
Our first result shows that this problem is \NP-complete even for $k=1$. 
\begin{observation}
\label{thm:generalG-k=1}
The \MAXUP{} problem is \NP-complete for $k=1$. 
\end{observation}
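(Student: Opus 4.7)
Membership in $\NP$ is routine: given a matching $M$, the property $|\bp_G(M)|\le 1$ is trivially verifiable in linear time, and popularity of $M$ can be tested in polynomial time via the standard LP-based or witness-based characterization of popular matchings, which applies to non-bipartite graphs as well.

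For $\NP$-hardness, my plan is to reduce from the problem of deciding whether a (possibly non-bipartite) preference system admits any popular matching; this is $\NP$-complete independently by Faenza et al.~\cite{FKPZ19} and Gupta et al.~\cite{GMSZ21}. Given such an instance $(G,\prec)$, I would construct in polynomial time an equivalent instance $(G',\prec')$ of \MAXUP{} with budget $k=1$, so that $(G,\prec)$ admits a popular matching if and only if $(G',\prec')$ admits a popular matching~$M'$ with $|\bp_{G'}(M')|\le 1$. The envisaged construction augments $(G,\prec)$ by a small auxiliary gadget on a constant number of new vertices, carefully interleaved with $V(G)$ in the preference lists, such that every popular matching of $G'$ must interact with the gadget in a rigid way, the gadget contributing at most a single unavoidable blocking edge, while the restriction of any such popular matching to $V(G)$ is a popular matching of $G$, and conversely.

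The main obstacle is that popular matchings of a non-bipartite graph can in general contain many blocking edges, so the construction cannot be a pure vertex-extension: any blocking pair of a popular matching $M$ in $G$ remains blocking for any matching of $G'$ that contains $M$. One must therefore also modify the preferences on $V(G)$ so that potential blocking pairs of $G$ are deflected through the gadget (e.g.\ by interposing gadget vertices at carefully chosen positions in the preference lists of the endpoints), and ensure that no new popular matchings of $G'$ are created which do not correspond to popular matchings of $G$. Once such a gadget is specified, correctness follows by a case analysis on how a popular matching of $G'$ with at most one blocking edge interacts with the gadget, together with the observation that popularity is preserved by the rewiring since the deflected positions can be chosen low enough to have no effect on which pairs can block.
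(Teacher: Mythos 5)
There is a genuine gap: your hardness argument never actually produces a reduction. You correctly identify the central obstacle---a popular matching of a non-bipartite instance may have many blocking edges, so one cannot simply embed $(G,\prec)$ into a larger instance and hope the budget $k=1$ is respected---but the proposed fix (``deflecting'' blocking pairs of $G$ through a gadget by interposing gadget vertices in the preference lists) does not withstand scrutiny and is left entirely unspecified. Whether an edge $(a,b)$ of $G$ blocks a matching $M$ depends only on the relative order of $b$ versus $M(a)$ in $a$'s list and of $a$ versus $M(b)$ in $b$'s list; inserting new gadget vertices into these lists does not change those relative orders, so a blocking pair among original vertices remains blocking for any matching that agrees with $M$ on $V(G)$. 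The only way to ``deflect'' it is to match $a$ or $b$ to a preferred gadget vertex, but then the restriction of the matching to $V(G)$ no longer matches these vertices, and the claimed correspondence ``popular in $G'$ with $\le 1$ blocking edge $\Leftrightarrow$ popular in $G$'' breaks down in both directions. In short, the correctness of your reduction is exactly the part that is missing, and the sketch gives no reason to believe a black-box reduction from popular-matching existence of this generic form exists at all.

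The paper sidesteps this difficulty with a white-box observation rather than a new construction: in the known \NP-hardness reduction for deciding whether a popular matching exists in a non-bipartite instance (\cite[Section~5.3]{CFK+22}), the constructed instance admits no stable matching, and every popular matching of that instance is blocked by exactly one specific edge $(r,r')$. Hence, on those very instances, ``a popular matching exists'' and ``a popular matching with at most one blocking edge exists'' coincide, which immediately yields \NP-hardness of \MAXUP{} for $k=1$ with no additional gadgetry. If you want to salvage your approach, you would need to either verify such a structural property of an existing construction (as the paper does) or design and prove correct a concrete gadget; the current proposal does neither. Your \NP-membership argument is fine, since verifying popularity of a given matching is polynomial-time solvable also in the non-bipartite setting.
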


\begin{proof}
See the hardness proof in \cite[Section 5.3]{CFK+22}. There is no stable matching in the constructed instance, and if there exists a popular matching $M$ then there is a unique blocking edge $(r,r')$ to~$M$.
\qed
\end{proof}

\begin{corollary}
\label{cor:general}
The \MUPMU{} problem is \NP-complete for any fixed utility function~$\omega:E \rightarrow \mathbb{N}$, 
even if the objective value is~$t=0$, the cost function is $c \equiv 1$, and our budget is $k=1$. 
\end{corollary}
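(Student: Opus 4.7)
The plan is to derive Corollary~\ref{cor:general} directly from Observation~\ref{thm:generalG-k=1} by exhibiting that the restricted \MUPMU{} problem described in the statement is essentially a reformulation of \MAXUP{} with $k=1$. First, I would verify membership in \NP: given a candidate matching~$M$, popularity can be tested in polynomial time via the standard dual/LP characterization, while $\omega(M)\geq t$ and $c(\bp_G(M))\leq k$ are evaluated by straightforward summation after computing $\bp_G(M)$ in linear time.

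For the hardness part, I would reduce from \MAXUP{} with $k=1$. Given an input $(G,\prec)$ of this problem, build the \MUPMU{} instance $(G,\prec,\omega,c,t,k)$ where $c\equiv 1$, $t=0$, $k=1$, and $\omega$ is the prescribed fixed utility function applied to the edges of~$G$. The construction is clearly polynomial, and the parameters meet every restriction listed in the corollary.

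The correctness hinges on two elementary observations: with $t=0$, the utility inequality $\omega(M)\geq 0$ is vacuously true for every matching~$M$, regardless of which specific $\omega$ was fixed; and with $c\equiv 1$, the budget condition $c(\bp_G(M))\leq 1$ reduces precisely to $|\bp_G(M)|\leq 1$, which is the condition defining a YES-instance of \MAXUP{} for $k=1$. Hence a feasible popular matching exists in the constructed \MUPMU{} instance if and only if one exists in the original \MAXUP{} instance. There is no genuine obstacle here: the corollary is essentially a restatement of Observation~\ref{thm:generalG-k=1} cast in the richer vocabulary of \MUPMU{}, making explicit that the hardness persists even when the utility threshold imposes no constraint whatsoever and the budget merely allows a single blocking edge.
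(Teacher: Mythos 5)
Your proposal is correct and matches the paper's (implicit) argument: the corollary is obtained from Observation~\ref{thm:generalG-k=1} by noting that $t=0$ makes the utility constraint vacuous for any fixed $\omega$, while $c\equiv 1$, $k=1$ makes the budget condition exactly $|\bp_G(M)|\leq 1$, so the restricted \MUPMU{} instance is equivalent to the \MAXUP{} instance with $k=1$. Nothing is missing; the NP-membership argument via polynomial-time popularity testing is also the standard one.
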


Motivated by this strong intractability, in the remainder we focus on the case where the graph is bipartite. 
In this case, interestingly, the \MUPMU{} problem has a strong connection to the problem of finding a popular matching with a fixed set of blocking edges.

\begin{center}
\fbox{ 
\parbox{14.5cm}{
\begin{tabular}{l}\FUP{}:  \end{tabular} \\
\begin{tabular}{p{1.8cm}p{12.0cm}}
Input: & A strict preference system $(G,\prec)$ 
and a subset $S \subseteq E(G)$ of edges. \\
Question: & Is there a popular matching $M$ in $G$ such that $\bp_G(M)=S$?
\end{tabular}
}}
\end{center}

On the one hand, a natural idea for finding maximum-utility matchings with bounded instability is to ``guess'' the set~$S$ of blocking edges, and search for a maximum-utility matching only among matchings~$M$ for which $\bp(M)=S$, using some structural insight.
On the other hand, given an instance of \FUP{} where our aim is to ensure that each edge in~$S$ is a blocking edge, 
a possible approach is to define a utility function 
that enforces certain edges in the neighborhood of the vertices covered by $S$ to be included in any feasible matching $M$, thus yielding $\bp(M)=S$.  
This intuitive two-way connection between the two problems can be observed in the details of our reductions in Section~\ref{sec:hardness}, and will also form the basis of our positive results in Section~\ref{sec:masterlist}.

\section{Hardness results}
\label{sec:hardness}

We start with the main result of this section: Theorem~\ref{thm:k=1-bounded} shows that both \MUPMU{} and \FUP{} are computationally intractable even if the input graph~$G$ is bipartite, $\Delta_G = 3$, and preferences are single-peaked and single-crossing. 
The hardness of \MUPMU{} holds even in the following very restricted setting. 

\begin{reptheorem}{repthm_k=1bounded}[$\star$
]
\label{thm:k=1-bounded}
The \MUPMU{} problem is \NP-complete even if  
\begin{itemize}
    \item \vspace{-3pt} 
    the input graph $G$ is bipartite with $\Delta_G=3$, 
    \item preferences are single-peaked and single-crossing,
    \item the cost function is $c \equiv 1$,
    \item the budget is $k=1$, 
\item the utility function is  $\omega(e)=\left\{
\begin{array}{ll}
1 & \textrm{ if $e = f^\star$}, \\
0 & \textrm{ otherwise,}
\end{array}
\right. \phantom{i}$ for some $f^\star \in E(G)$, and
    \item the objective value is $t=1$.
\end{itemize}
\vspace{-3pt}
The \FUP{} problem is \NP-complete even if 
\begin{itemize}
\item the input graph $G$ is bipartite,
\item $\Delta_G=3$, preferences are single-peaked and single-crossing, and 
\item $|S|=1$ for the set $S$ of blocking edges.
\end{itemize}
\end{reptheorem}

Since we can test feasibility and popularity in polynomial time~\cite{HK13}, both problems are in \NP.
To prove $\mathsf{NP}$-hardness, we provide a reduction from \myproblem{Exact-3-SAT},  
the variant of 3-SAT where each clause contains exactly three literals. The correctness of our reduction heavily relies on Observation~\ref{obs:characterization} below which characterizes popular matchings with only a single blocking edge,
and can be thought of as a reformulation of the characterization of popular matchings~\cite{HK13}.

To state Observation~\ref{obs:characterization}, we need some further notation.
Given a matching~$M$ in a graph~$G$, the subgraph~$G_M$ of~$G$ is obtained by deleting those edges~$(a,b)$ outside~$M$ from~$G$ where both~$a$ and~$b$ prefer their partner in~$M$ to each other.

\begin{observation}
\label{obs:characterization}
Given a strict preference system $(G,\prec)$ where $G=(A,B;E)$ is bipartite, a matching $M$ in $G$ is a popular matching with exactly one blocking edge $e=(u,v) \in E$ if and only if the following conditions hold: 
\begin{itemize}
\addtolength{\itemindent}{6pt}
\item[\textup{(c1)}] \vspace{-3pt}
$e$ blocks $M$ in $G$; 
\item[\textup{(c2)}] $M$ is a stable matching in $G-e$;
\item[\textup{(c3)}] there exists no $M$-alternating path in $(G-e)_M$
\begin{itemize}
\addtolength{\itemindent}{22pt}
\item[\textup{(c3/i)\phantom{i}}] 
from a vertex unmatched by $M$ to $u$ or to $v$, having even length, or 
\item[\textup{(c3/ii)}] 
from $u$ to $v$, starting and ending with an edge of $M$.
\end{itemize}
\end{itemize}
\end{observation}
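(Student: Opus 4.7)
The plan is to prove the biconditional by decomposing the symmetric difference $M\triangle M'$ of $M$ with any more-popular matching~$M'$, and by using the standard labelling of non-$M$ edges: for a non-$M$ edge $(a,b)$, label each endpoint with ``$+$'' if it is unmatched or strictly prefers the other endpoint to its $M$-partner, and ``$-$'' otherwise. Under this labelling, the blocking edges of $M$ are precisely the $(+,+)$-edges, and $(G-e)_M$ is obtained from $G-e$ by removing all $(-,-)$ edges. Conditions (c1) and (c2) are the easy parts: in the forward direction (c1) is the hypothesis and (c2) holds because any edge blocking $M$ in $G-e$ would also block $M$ in $G$, contradicting $|\bp_G(M)|=1$; in the backward direction, (c1) and (c2) together directly give $\bp_G(M)=\{e\}$.

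For (c3) in the forward direction, I will argue contrapositively. Given a forbidden $M$-alternating path~$P$ in $(G-e)_M$, the goal is to build a matching $M^\star$ more popular than $M$. In case (c3/i), with $P$ of even length from an unmatched vertex~$x$ to~$u$ (the case ending at~$v$ being symmetric), set $M^\star=(M\triangle E(P))\cup\{e\}$, first freeing~$v$ if it was matched; bipartite parity guarantees $u$ is unmatched in $M\triangle E(P)$. The popularity gain decomposes as: $x$, $u$, $v$ prefer $M^\star$ (contribution $+3$), $M(v)$ prefers~$M$ (contribution $-1$), and each internal non-$M$ edge of~$P$, lying in $(G-e)_M$, carries labels $(+,-)$ or $(-,+)$ (not $(-,-)$ by survival in $(G-e)_M$, not $(+,+)$ by~(c2)), contributing $0$. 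Net gain $+2$, contradicting popularity. Case (c3/ii) is handled identically: swapping along a $u$--$v$ path whose end-edges are in~$M$ leaves $u$ and $v$ both unmatched, and adding~$e$ again gives a $+2$ gain from $\{u,v\}$. If $v\in V(P)$ in case (c3/i), bipartite parity places~$v$ at an odd position on~$P$, so the sub-path from $v$ to~$u$ has odd length and is $M$-bounded at both ends, reducing the situation to a (c3/ii)-violation.

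For the backward direction, suppose toward contradiction that some~$M'$ is more popular than~$M$. Decompose $M\triangle M'$ into alternating paths and cycles. Any component avoiding~$e$ lies entirely in $G-e$, where $M$ is stable and hence popular in the bipartite setting, so it contributes non-positively to the popularity difference. Consequently the component~$C^\star$ containing~$e$ must contribute strictly positively. Label bookkeeping gives: $e$ contributes $+2$; every other non-$M$ edge of~$C^\star$ is $(+,-)$ or $(-,+)$ (contribution~$0$) or $(-,-)$ (contribution~$-2$); each $C^\star$-endpoint matched in~$M$ but unmatched in~$M'$ contributes $-1$. A short arithmetic check shows that positive gain forces no $(-,-)$-edge on~$C^\star$, so $C^\star\subseteq G_M$ and therefore $C^\star-e\subseteq (G-e)_M$. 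A case analysis on the shape of~$C^\star$---cycle; path with both endpoints unmatched in~$M$; or path with exactly one endpoint unmatched in~$M$---then exhibits in $C^\star-e$ a sub-path violating either (c3/i) or (c3/ii), with bipartite parity fixing the correct length and end-edge types in each case.

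The main obstacle will be the backward-direction accounting that forces $C^\star\subseteq G_M$---i.e., that strictly positive gain rules out $(-,-)$-edges on the component---because exactly this property is what places the extracted forbidden alternating path inside $(G-e)_M$ as demanded by~(c3). A secondary subtlety is the parity-based routing of the case $v\in V(P)$ in the forward direction from (c3/i) to (c3/ii), which is easy but must be made explicit.
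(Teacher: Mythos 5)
Your proof is correct in substance, but it takes a different route from the paper: the paper gives no standalone argument for Observation~\ref{obs:characterization} at all---it presents the statement as an immediate reformulation of the Huang--Kavitha characterization of popular matchings~\cite{HK13}, whose forbidden structures (an alternating path from an unmatched vertex through a blocking edge, an alternating cycle through a blocking edge, an alternating path through two blocking edges) specialize, when $e=(u,v)$ is the \emph{only} blocking edge, exactly to (c3/i), (c3/ii), and a vacuous third condition. You instead re-derive the needed equivalence from first principles via the $M\triangle M'$ decomposition and the $+/-$ labelling, i.e., you essentially reprove the relevant half of the characterization in~\cite{HK13}. What the paper's route buys is brevity and a clean reduction to a known theorem; what yours buys is a self-contained argument with explicit more-popular matchings, which is genuinely useful. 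Two small repairs are needed in your write-up, neither of which affects the structure: in case (c3/i) with $v$ matched, the first path edge $(x,p_1)$ is not a zero-sum edge---$p_1$ must vote for $M$ (otherwise $(x,p_1)$ would block $M$), so the net gain is $+1$, not $+2$ (still positive, so the contradiction stands); and the degenerate corner where the unmatched path-start coincides with $v$ (or where $u$ or $v$ is itself unmatched, which forces you to admit length-zero paths in (c3/i), as the statement implicitly does) should be handled separately, e.g., by the one-swap matching $M-(u,M(u))+e$ or $M-(v,M(v))+e$, since your recipe of ``freeing $v$ and adding $e$'' clashes with $v$ having just been re-matched along the path. With those patches, the backward-direction accounting you worried about is fine: $\Delta_{C^\star}=2-2\#(-,-)-\#\{\text{endpoints matched in }M\text{ but not in }M'\}\geq 1$ indeed forces no $(-,-)$ edge on $C^\star$, placing $C^\star-e$ inside $(G-e)_M$ as required.
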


\noindent
{\it Proof sketch for Theorem~\ref{thm:k=1-bounded}.}
Given an instance $\varphi$ of \myproblem{Exact-3-SAT}, we construct a bipartite graph $G$ with only a single edge $f^\star$ of non-zero utility, not contained in any stable matching, so that achieving the target utility  requires the inclusion of $f^\star$ in the matching. However, a matching containing $f^\star$ will turn another edge~$e^\star$, adjacent to $f^\star$, into a blocking edge, thus introducing instability. 
In view of Observation~\ref{obs:characterization}, the computational hardness of finding a matching that contains $f^\star$ and is blocked only by $e^\star$ lies in ensuring conditions (c3/i) and (c3/ii) while avoiding the emergence of additional blocking edges.

In our construction, each clause $c$ and each variable $x$ in $\varphi$ is represented by a vertex~$a^c$ and~$b^x$, respectively, left unmatched by any stable matching of $G$.
Moreover, each literal $\ell$ in some clause $c$ is represented by a cycle $C^{x,\ell}$ in $G$, and similarly, the possible values \texttt{true} and \texttt{false} of some variable $x$ are represented by cycles $C^{x,\TT}$ and $C^{x,\FF}$, respectively.
Next, for each clause~$c$ we create a path that leads from~$a^c$ to the blocking edge~$e^\star$ and goes through certain edges of the three cycles corresponding to the three literals of~$c$.
Similarly, for each variable~$x$ we create a path that leads from~$b^x$ to~$e^\star$ and goes through certain edges of the cycles~$C^{x,\TT}$ and~$C^{x,\FF}$.

The preferences of the vertices along these paths and cycles are defined in a way that ensuring condition (c3/i) requires the desired matching $M$ to have the property that $M \triangle M_0$ contains certain cycles, for some fixed stable matching $M_0$ (defined as part of the construction).
Namely, condition (c3/i) for the unmatched vertex~$a^c$ corresponding to some clause $c$ guarantees that $M \triangle M_0$ contains at least one cycle $C^{c,\ell}$, corresponding to a literal of $c$.
Similarly, condition (c3/i) for $b^x$ for some variable $x$ guarantees that either $C^{x,\TT}$ or $C^{x,\FF}$ is contained in $M \triangle M_0$. 
The crux of the construction is the addition of so-called consistency edges running between each cycle $C^{c,\ell}$ corresponding to a literal~$\ell$ of some clause~$c$ and the cycle $C^{x,\TT}$ in case $\ell$ is the positive literal of variable~$x$, or the cycle $C^{x,\FF}$ in case $\ell$ is the literal $\overline{x}$. 
The condition that no such consistency edge can block the desired matching~$M$  ensures that the truth assignment encoded by~$M$ indeed satisfies each clause, and vice versa, any truth assignment that satisfies $\varphi$ implies a matching~$M$ that is popular, contains~$f^\star$, and is blocked only by the edge~$e^\star$.

The technical details of incorporating the above ideas into a carefully designed construction, as well as the additional ideas and arguments that ensure the required structural properties of the preferences, can be found in Appendix~\ref{sec:app-hardness}.
\hfill \qed
\medskip

We also investigate whether \MUPMU{} becomes easier if we do not allow edges with zero utility: 
Theorem~\ref{thm:k=1-maxsize-bounded} below shows that if the utility function is $\omega \equiv 1$ and we aim for a popular matching with only one more edge than a stable matching, 
the problem is \NP-complete even if all other restrictions of Theorem~\ref{thm:k=1-bounded} remain in place.

Theorem~\ref{thm:k=1-maxsize-bounded} can be obtained by combining the reduction proving  Theorem~\ref{thm:k=1-bounded}
with ideas used in a similar reduction in~\cite[Section 5.1]{CFK+22};
see Remark~\ref{remark:altproof} in Appendix~\ref{sec:app-hardness} for more details.

\begin{reptheorem}{repthm_k=1maxsize}[$\star$]
\label{thm:k=1-maxsize-bounded}
The \MUPMU{} problem is \NP-complete even if  
\begin{itemize}
    \item \vspace{-3pt} 
    the input graph $G=(V,E)$ is bipartite and
     $\Delta_G = 3$, 
    \item preferences are single-peaked and single-crossing,
    \item the cost function is $c \equiv 1$,
    \item the budget is $k=1$, 
    \item the utility function is $\omega \equiv 1$, and
    \item the objective value is $t=|M_s|+1=\frac{|V|}{2}$ where $M_s$ is a stable matching in~$G$.
\end{itemize}
\end{reptheorem}

\begin{corollary}
\label{cor:maxsize}
Given a preference system~$(G,\prec)$, finding a popular matching in~$G$ with at most one blocking edge  that is larger than a stable matching is \NP-complete, even if preferences in $(G,\prec)$ are single-peaked and single-crossing.
\end{corollary}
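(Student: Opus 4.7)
My plan is to derive this corollary directly from Theorem~\ref{thm:k=1-maxsize-bounded} by observing that the decision problem stated here is, up to parameter substitution, a syntactic reformulation of the \MUPMU{} instances produced by that hardness construction. Since the corollary asks about an input class that is strictly more general than the one considered in the theorem (it does not require bipartiteness or $\Delta_G=3$), it suffices to show that every instance of the restricted setting from Theorem~\ref{thm:k=1-maxsize-bounded} embeds as an instance of the corollary's problem.

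For membership in \NP{}, I would note that given a candidate matching $M$ we can test popularity in polynomial time via the characterization of Huang and Kavitha~\cite{HK13} (already invoked in the proof of Observation~\ref{thm:generalG-k=1}), count $|\bp_G(M)|$ by inspecting all edges, compute a stable matching $M_s$ in $G$ in polynomial time whenever one exists (in particular on all bipartite inputs, by Gale--Shapley), and check whether $|M|>|M_s|$.

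For the hardness direction, I would take any instance $(G,\prec,\omega,c,t,k)$ produced by the reduction of Theorem~\ref{thm:k=1-maxsize-bounded} and interpret it as the instance $(G,\prec)$ of the corollary's problem. Since in that instance $\omega\equiv 1$, $c\equiv 1$, $k=1$ and $t=|M_s|+1$ for some stable matching $M_s$, the condition $\omega(M)\geq t$ is equivalent to $|M|>|M_s|$, while $c(\bp_G(M))\leq k$ is equivalent to $|\bp_G(M)|\leq 1$. Hence popular feasible matchings of the \MUPMU{} instance correspond one-to-one to popular matchings in $(G,\prec)$ with at most one blocking edge that are strictly larger than a stable matching. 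The equivalence of YES-answers is then immediate, and because the reduction of Theorem~\ref{thm:k=1-maxsize-bounded} already outputs preference systems that are single-peaked and single-crossing, the restricted hardness claim of the corollary is inherited.

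The only obstacle I anticipate is cosmetic: confirming that ``larger than a stable matching'' is unambiguous, which on the bipartite hardness instance is guaranteed by the fact that all stable matchings have the same size (the Rural Hospitals property). Beyond this, the corollary's proof amounts to a one-line translation once Theorem~\ref{thm:k=1-maxsize-bounded} is established.
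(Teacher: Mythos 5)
Your proposal is correct and matches the paper's intent: the corollary is stated as an immediate consequence of Theorem~\ref{thm:k=1-maxsize-bounded}, and your translation (with $\omega\equiv 1$, $c\equiv 1$, $k=1$, $t=|M_s|+1$, the feasibility conditions become ``$|M|>|M_s|$ and $|\bp_G(M)|\leq 1$'') together with the polynomial-time verifiability of popularity, blocking edges, and stable matching size is exactly the one-line derivation the paper leaves implicit.
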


\section{Algorithms for preferences admitting a master list}
\label{sec:masterlist}

In this section we focus on the case when preferences admit a master list on one side. It is known 
that strict preference systems with this property admit a  unique stable matching~\cite{IMS08}; this fact is the backbone of our simple approach.

\begin{theorem}
\label{thm:fup-master}
An instance $(G,\prec,S)$ of \FUP{} where the input graph $G=(A,B;E)$ is bipartite and $(G,\prec)$ admits a master list on one side 
can be solved in $O(|E|)$ time.
\end{theorem}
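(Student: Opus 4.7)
The plan rests on the fact that when $(G,\prec)$ admits a master list on one side, the preference system $(G-S,\prec^{G-S})$ inherits the same master list by restriction, and therefore admits a unique stable matching $M^\star$ by the classical uniqueness result for master lists~\cite{IMS08}. This will collapse the entire search space of the \FUP{} problem to a single candidate.

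First I would argue that $M^\star$ is the only matching worth checking: if $M$ is any popular matching in $G$ with $\bp_G(M)=S$, then no edge of $E(G)\setminus S$ blocks $M$, so $M$ is stable in $G-S$; by uniqueness, $M=M^\star$. Next I would compute $M^\star$ explicitly: a single sweep along the master list, in which every vertex in the given order is matched to its most preferred still-free neighbor in $G-S$, produces the unique stable matching of $G-S$ in $O(|E|)$ total time, because each vertex on the opposite side ranks its neighbors consistently with the master list.

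Finally I would perform two verifications on $M^\star$: (i) every $e\in S$ blocks $M^\star$ in $G$, which is a direct $O(|S|)$ check that only inspects the two endpoints of each $e$ and their $M^\star$-partners; and (ii) $M^\star$ is popular in $G$, which can be decided in $O(|E|)$ time using the linear-time popularity test for strict bipartite preference systems of Huang and Kavitha~\cite{HK13}. The algorithm accepts iff both checks succeed.

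The principal conceptual step is the uniqueness argument, which sidesteps the combinatorial difficulty one would otherwise face in trying to generalize Observation~\ref{obs:characterization} to arbitrary blocking sets $S$. Once that collapse to a single candidate is established, no alternating-path analysis is needed, and the $O(|E|)$ bound follows immediately from the greedy computation of $M^\star$ together with the existing linear-time popularity test; that external popularity-testing ingredient is the only non-trivial tool invoked.
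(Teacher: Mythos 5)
Your proposal is correct and follows essentially the same route as the paper's proof: uniqueness of the stable matching in $G-S$ under a one-sided master list collapses the search to a single candidate, which is then checked for popularity in $G$ and for having blocking set exactly $S$ (your check that every $e\in S$ blocks suffices, since stability in $G-S$ already excludes blocking edges outside $S$). The only cosmetic difference is that you spell out the greedy serial-dictatorship computation of the unique stable matching, where the paper simply cites the $O(|E|)$ algorithm of~\cite{IMS08}.
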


\begin{proof}
Let $M$ be a popular matching in $G$ with $\bp_G(M)=S$.
Then $M$ is stable in the graph $G' = G-S$.
Since $(G,\prec)$ admits a master list on one side,
$G'$ admits a unique stable matching $M'$ which can be found in $O(|E|)$ time~\cite{IMS08}.
If $M'$ is not popular in $G$ or $\bp_G(M) \neq S$, we output `No', otherwise we output~$M'$.
Finding~$M'$ in~$G'$, computing $\bp_G(M')$ and testing whether $M'$ is popular in~$G$ can all be done in $O(|E|)$ time (for testing popularity, see~\cite{HK13}).
\qed 
\end{proof}

\begin{theorem}
\label{thm:mupmu-master}
An instance $(G,\prec,\omega,c,t,k)$ of \MUPMU{} where the input graph $G=(A,B;E)$ is bipartite,  $(G,\prec)$ admits a master list on one side, and $c(e) \geq 1$ for all edges $e \in E$ can be solved in~$O(|E|^{k+1})$ time.
\end{theorem}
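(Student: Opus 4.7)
\medskip
\noindent
\emph{Proof plan.} The plan is to combine an enumeration of small candidate blocking-edge sets with the fast algorithm of Theorem~\ref{thm:fup-master}. The starting observation is that since $c(e)\geq 1$ for every edge~$e\in E$, any set $S \subseteq E$ with $c(S)\leq k$ must satisfy $|S| \leq k$. Hence there are only $\sum_{i=0}^{k}\binom{|E|}{i}=O(|E|^k)$ candidate sets of blocking edges to inspect.

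The algorithm would therefore enumerate every $S\subseteq E$ with $|S|\leq k$, discard those with $c(S)>k$, and invoke the $O(|E|)$-time procedure of Theorem~\ref{thm:fup-master} on the instance $(G,\prec,S)$. Each such call either returns a popular matching $M_S$ in $G$ with $\bp_G(M_S)=S$, or certifies that no such popular matching exists; this matching is in fact uniquely determined, because $(G-S,\prec^{G-S})$ inherits the master list on one side and thus admits a unique stable matching. Among all matchings $M_S$ returned across the iterations, I would keep one of maximum utility. The algorithm outputs it provided its utility is at least~$t$, and reports ``No'' otherwise.

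For correctness, any feasible solution to the \MUPMU{} instance is a popular matching $M$ with $c(\bp_G(M))\leq k$; by the cost bound, $S:=\bp_G(M)$ satisfies $|S|\leq k$ and is therefore enumerated. In the corresponding iteration Theorem~\ref{thm:fup-master} recovers exactly~$M$ (by uniqueness), so the best matching kept by the algorithm is a maximum-utility popular matching whose blocking edges have cost at most~$k$. The running time is dominated by the $O(|E|^k)$ enumeration multiplied by the $O(|E|)$ work per iteration, giving $O(|E|^{k+1})$ in total. I do not foresee a substantive obstacle; the one subtlety is exploiting positivity of $c$ to convert the cost budget into a cardinality bound, which is what makes purely combinatorial enumeration feasible and yields the stated running time.
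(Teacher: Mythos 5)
Your proposal is correct and follows essentially the same route as the paper: since $c(e)\geq 1$ forces any admissible blocking set to have size at most~$k$, one enumerates all $O(|E|^k)$ such sets and uses the uniqueness of the stable matching in $G-S$ under a one-sided master list (the content of Theorem~\ref{thm:fup-master}) to test each candidate in $O(|E|)$ time, giving $O(|E|^{k+1})$ overall. No gaps.
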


\begin{proof}
Since all edges have cost at least~$1$, the desired matching may admit at most $k$ blocking edges. 
As observed in the proof of Theorem~\ref{thm:fup-master}, for a given subset~$S \subseteq E$ of edges, there exists at most one  matching~$M$ with $\bp_G(M)=S$.
By trying all edge sets $S \subseteq E$ of size at most~$k$, we can check all possible solutions in time $O(|E|^{k+1})$, since we can check feasibility in linear time.
\qed
\end{proof}

We contrast Theorem~\ref{thm:mupmu-master} by showing that the running time $O(|E|^{k+1})$ is optimal in the sense that we cannot expect an algorithm that runs in $f(k) \cdot |E|^{O(1)}$ time for some function~$f$, i.e., an FPT algorithm with parameter~$k$, as the problem is $\mathsf{W}[1]$-hard with parameter~$k$, even for~$t=1$.

\begin{reptheorem}{repthm_ML_W1hard}[$\star$]
\label{thm:masterlist-W1-hard}
The \MUPMU{} problem is \NP-complete and $\mathsf{W}[1]$-hard with parameter $k$, even if 
\begin{itemize}
    \item \vspace{-3pt} the input graph~$G$ is bipartite,
    \item preferences admit a master list on both sides,  
    \item the cost function is $c \equiv 1$, and
\item either \emph{(a)} the utility function  is  $\omega(e)=\left\{
\begin{array}{ll}
1 & \textrm{ if $e = f^\star$}, \\
0 & \textrm{ otherwise,}
\end{array}
\right.$ for some $f^\star \in E(G)$ \\
$\phantom{either \emph{(b)} i}$ with $\omega(M_s)=0$, and the objective value is $t=1$, or \\
    $\phantom{either}$ \emph{(b)}
    $\omega \equiv 1$ and   $t=|M_s|+1=|V(G)|/2$, \\
    where $M_s$ is the unique stable matching in $G$.\end{itemize}
\end{reptheorem}

To prove Theorem~\ref{thm:masterlist-W1-hard}, we need a different approach than the one used to prove the results of Section~\ref{sec:hardness}. 
There is a simple reason why preferences in those constructions do not admit a master list (on either side): vertices within cycles corresponding to literals or to truth assignments of a variable are cyclic (in the sense that each vertex on such a cycle prefers the ``next'' vertex along the cycle to the ``previous'' vertex on the cycle, when traversing the cycle in one direction), and hence do not admit a master list on either side. 

The reduction proving Theorem~\ref{thm:masterlist-W1-hard} is from the $\mathsf{W}[1]$-hard \myproblem{Multicolored Clique} problem, 
and although it applies standard techniques from the literature (e.g., a similar approach is used in \cite{GJR+20}), proving its correctness requires detailed arguments.  
To give the reader some intuition about the workings of the reduction, we present a short sketch.

\medskip
\noindent
{\it Proof sketch for Theorem~\ref{thm:masterlist-W1-hard}.}
The input of  \myproblem{Multicolored Clique} is a graph $G=(V,E)$ with~$V$ partitioned into sets $V_1,\dots,V_q$, and an integer parameter~$q$, and the task is to decide whether $G$ contains a clique of size~$q$ containing exactly one vertex from each of the sets~$V_i$. 

We introduce a vertex gadget $G_i$ for each $i \in [q]$, and an edge gadget $G_{i,j}$ for each $\{i,j\} \subseteq [q]$ with $i<j$. Both vertex and edge gadgets will have the same underlying structure, 
which we present now for $G_i$: It contains an edge $(a_v,b_v)$ for each $v \in V_i$, a ``source'' vertex~$s_i$ and a ``sink'' vertex~$t_i$.\footnote{We only use the terms ``source'' and ``sink'' for illustration; the constructed graph is undirected.}
We connect $s_i$ to each vertex in $\{a_v:v \in V_i\}$, and analogously,  we connect $t_i$ to each vertex in $\{b_v:v \in V_i\}$. Edge gadgets are defined analogously, with $G_{i,j}$ containing edges of the form~$(a_e,b_e)$ for each edge $e \in E$ running between $V_i$ and~$V_j$.

The next idea is to connect all gadgets, one after the other, threading them along a path by  connecting the sink vertex of each gadget with the source vertex of the next gadget (see Figure~\ref{fig:2ML-hardness} in Appendix~\ref{sec:app-masterlist} for an illustration). 
Preferences are defined so that the unique stable matching $M_s$ contains all edges of the form $(a_v,b_v)$ and $(a_e,b_e)$, and leaves unmatched only the source vertex of the first gadget, say $s_1$, and the sink vertex of the last gadget, say $t_{q-1,q}$. Thus, a desired matching~$M$ will be such that $M \triangle M_s$ is a path $P_M$ from $s_1$ to $t_{q-1,q}$ that traverses all gadgets. The edges of $P_M$ used within the gadgets will therefore correspond to selecting $q$ vertices and $\binom{q}{2}$ edges in the input graph~$G$. 
Additional inter-gadget edges, running between vertex and edge gadgets,
will ensure that $M$ is a popular matching of size $|M_s|+1$ with $|\bp(M)|\leq q+\binom{q}{2}$ exactly if the endpoints of any edge ``selected'' by $P_M=M_S \triangle M$ are also ``selected'', which in turn can happen exactly if $G$ admits a clique of size~$q$ in $G$ as required.

The definitions for the preferences in the construction as well as the precise arguments for its correctness can be found in Appendix~\ref{sec:app-masterlist}.
\qed

\medskip
\paragraph{Free edges.}
One may wonder if the restriction in Theorem~\ref{thm:mupmu-master} that all edge costs are at least~1 is necessary, or our algorithm can be extended to accommodate edges of cost~0. Such edges are called \emph{free edges} in the literature, and a matching whose blocking edges are all free edges is called a \emph{socially stable} matching.
It is known that
deciding whether there exists a complete socially stable matching in a bipartite instance with master lists on both sides is \NP-complete~\cite[Theorem~5.3.4]{Kwa15}; however, this does not imply the following theorem where we prove that it is \NP-hard to find a complete \emph{popular} socially stable matching.
The proof of Theorem~\ref{thm:sociallystable} is a slight modification of the reduction proving Theorem~\ref{thm:masterlist-W1-hard}.

\begin{reptheorem}{repthm_sociallystable}[$\star$]
\label{thm:sociallystable}
The \MUPMU{} problem is \NP-complete  
even if 
\begin{itemize}
    \item \vspace{-4pt} the cost function $c$ is binary, 
    \item the budget is $k=0$, and 
    \item all other restrictions of Theorem~\ref{thm:masterlist-W1-hard} hold.
\end{itemize}
\end{reptheorem}


\paragraph{Ties in the master list.}
Another way of generalizing Theorem~\ref{thm:fup-master} would be to extend its result to the case where preferences are not necessarily strict but may include ties.
We define the \MUPMUT{} problem the same way as its strict variant, with the only difference that the input preference system may not be strict; we define \FUPT{} analogously.

Our next result shows that there is no hope that the algorithm of Theorem~\ref{thm:fup-master} can be extended to the case where we allow ties in the preference lists, even if we further require severe restrictions on the input. 
The proof of Theorem~\ref{thm:masterlists-ties} is based on the reduction proving Theorem~\ref{thm:k=1-bounded}, and its main idea is to circumvent the problem of cyclic preferences in the construction by using ties.

\begin{reptheorem}{repthm_MLties}[$\star$]
\label{thm:masterlists-ties}
The \MUPMUT{} problem is \NP-complete even if
\begin{itemize}
    \item \vspace{-3pt} the input graph $G=(A,B;E)$ is bipartite and $\Delta_G = 3$,
    \item preferences on both sides admit a master list, 
    \item the cost function is $c \equiv 1$, 
    \item the budget is $k=1$, 
    \item \vspace{-3pt} the utility function is  $\omega(e)=\left\{
\begin{array}{ll}
1 & \textrm{ if $e = f^\star$}, \\
0 & \textrm{ otherwise,}
\end{array}
\right. \phantom{i}$ for some $f^\star \in E(G)$, and
    \item \vspace{-1pt} the objective value is $t=1$.
\end{itemize}
 \vspace{-3pt}
The \FUPT{} problem 
is \NP-complete even if the input graph~$G$ is bipartite,  $\Delta_G = 3$,  preferences  on both sides admit a master list, and $|S|=1$ for the set $S$ of blocking edges.
\end{reptheorem}

\section{Pareto-optimal matchings with bounded instability}
\label{sec:pareto}
In this section we shift our attention to the following problem. 

\begin{center}
\fbox{ 
\parbox{14.5cm}{
\begin{tabular}{l}\MUPOMU{}:  \end{tabular} \\
\begin{tabular}{p{1.8cm}p{12.0cm}}
Input: & A strict preference system $(G,\prec)$, 
a utility function~$\omega:E(G) \rightarrow \mathbb{N}$, 
a cost function~$c:E(G) \rightarrow \mathbb{N}$, and two integers~$t$ and~$k$. \\
Question: & Is there a Pareto-optimal matching in $G$ whose utility is at least~$t$ and whose blocking edges have  total cost at most~$k$?
\end{tabular}
}}
\end{center}

A natural approach to solve this problem is  to
guess the set $S$ of blocking edges in~$G$, and find a  stable matching $M$ of maximum utility in the graph~$G-S$. 
But even though $M$ is Pareto-optimal in~$G-S$, it may not be Pareto-optimal in~$G$ (see Example~\ref{example:PO} in Appendix~\ref{sec:app-pareto}). We propose the following approach~instead.


\paragraph{Our algorithm.}

Let $M$ be a solution for our input instance~$I=(G,\prec,\omega,c,t,k)$. 
Consider the set $F_M$ of edges that run between vertices in $V(\bp_G(M))$ but do \emph{not} belong to~$M$. We will call the pair $(\bp_G(M),F_M)$ the \emph{hint} for $M$. 

Let $\mathcal{H}_I$ denote the set of all pairs $(S,F)$ where $S \subseteq E$ with~$|S| \leq c(S) \leq k$, and $F$ is a subset of the edges in $G[V(S)]-S$
such that $G[V(S)] - (S \cup F)$ is a matching. 
Our algorithm iterates over each~$H \in \mathcal{H}_I$, and searches for a matching $M_H$ whose hint is $H$.
With each hint~$H$, we associate a weight function $w_H:E\setminus S \rightarrow \mathbb{N}$ 
as follows: 

\vspace{-8pt}
\[
w_H(e)= \left\{
\begin{tabular}{ll}
0 & \textrm{ if $e \in F$,} \\
$\omega(e)+w_0 \quad$ & \textrm{ if $e \in E \setminus (S \cup F)$,}
\end{tabular}
\right.
\vspace{-4pt}
\]
where $w_0$ is large enough to ensure that any maximum-weight stable matching in~$G-S$ is disjoint from~$F$; 
setting $w_0=|M_s| \cdot \max_{e \in E}\{\omega(e)\}$ suffices for our purpose, where $M_s$ is any stable matching in~$G-S$.

\vspace{-14pt}
\begin{algorithm}[h]
\caption{ \hspace{10pt} Input: $I=(G,\prec,\omega,c,t,k)$.}
\label{alg:PO}
\begin{algorithmic}[1]
\ForAll{$H=(S,F) \in \mathcal{H}_I$} 
\State Compute a maximum-weight stable matching $M_H$ w.r.t.~$w_H$ in $G-S$. 
\If{$M_H$ is feasible and Pareto-optimal in $G$} \textbf{return} $M_H$. \EndIf
\EndFor
\State \textbf{return} ``No feasible Pareto-optimal matching exists for $I$.''  
\end{algorithmic}
\end{algorithm}




\begin{reptheorem}{repthm_POalgo}
\label{thm:t1-xpalgo}
Algorithm~\ref{alg:PO} solves an instance $(G,\prec,\omega,c,t,k)$ of \MUPOMU{}  where $G=(V,E)$ is bipartite 
and $c(e) \geq 1$ for all edges $e \in E$ in $2^{O(k \log k)} |E|^{k+2}$ time. 
\end{reptheorem}

\begin{proof}
Note that whenever Algorithm~\ref{alg:PO} outputs a matching, it is clearly Pareto-optimal and feasible in~$G$ (since it checks both).
So it remains to prove that whenever there exists a feasible Pareto-optimal matching $M$ in $G$, Algorithm~\ref{alg:PO} finds one. Let $H=(S_M,F_M)$ be the hint of $M$; clearly, $H \in \mathcal{H}_I$.
We claim that $M_H$ is also feasible and Pareto-optimal (though $M \neq M_H$ is possible). 

First, since $M_H$ is stable in $G-S_M$, we get $\bp_G(M_H) \subseteq \bp_G(M)=S_M$ and so $c(\bp_G(M_H)) \leq c(\bp_G(M)) \leq k$.

Second, to see $\omega(M_H) \geq \omega(M)\geq t$, note that $w_H(M_H) \geq w_H(M)$ by the choice of $M_H$. Since $M$ has weight $|M_s|\cdot w_0 + \omega(M)$, 
by our choice of $w_0$ it follows that $M_H$ must also contain $|M_s|$ edges whose weight (according to $w_H$) is at least $w_0$, i.e., $|M_s|$ edges in $E \setminus (S_M \cup F_M)$. 
Since $|M_s|=|M|=|M_H|$, this implies $M_H \cap F_M=\emptyset$. Thus, we get 
\[w_H(M_H)=|M_s| \cdot w_0+\omega (M_H)\geq w_H(M)=|M_s|\cdot  w_0+\omega(M),\] 
implying $\omega(M_H) \geq \omega(M)$. 
Hence, $M_H$ is feasible.

Third, we claim that $M_H$ is Pareto-optimal in $G$. 
Let us assume otherwise for the sake of contradiction. Then there exists a matching~$M'$ in~$G$ that is a Pareto-improvement over~$M_H$. 
By the definition of Pareto-improvements, any edge in~$M' \setminus M_H$ blocks $M_H$, 
so $M' \setminus M_H \subseteq S_M$. 
Let $D=M' \triangle M_H$.
Note that $D$ is the disjoint union of cycles and paths whose endpoints are both unmatched in~$M_H$ but matched in~$M'$. 
This implies that each edge in $D \cap M_H$ connects two vertices incident to some edge in $D \cap M' \subseteq M' \setminus M_H \subseteq S_M$, i.e., 
edges of~$D \cap M_H$ are in~$G[V(S_M)]$.
Recall that every edge of~$G[V(S_M)]$ that is not in~\mbox{$S_M \cup F_M$} is in~$M$ (by the definition of~$F_M$).
Therefore, as $M_H \cap F_M= \emptyset$ and $M_H$ is a matching in~$G-S_M$, we obtain $D \cap M_H \subseteq M$.  

We claim that $M \triangle D$ is a matching that is a Pareto-improvement over~$M$ in~$G$, contradicting the Pareto-optimality of $M$. 
That $M \triangle D$ is a matching follows from the facts that $D \cap M_H \subseteq M$, that $D \setminus M_H \subseteq S_M$ is disjoint from~$M$, and that $M_H$ and $M$ match the same set of vertices in $G$ (as they are both stable in $G-S_M$). 
To see that $M \triangle D$ is a Pareto-improvement over $M$, it suffices to observe that all edges in~$D \setminus M \subseteq \S_M=\bp_G(M)$ block $M$.  

Hence, we can conclude that Algorithm~\ref{alg:PO} will find at least one matching, namely $M_H$, that is feasible and Pareto-optimal in~$G$.

Regarding the running time, first observe that $|\mathcal{H}_I| \leq |E|^k \cdot k!$ because there are at most~$|E|^k$ ways to select some $S\subseteq E$ of size at most~$k$, and given~$S$, there are at most~$k!$ possibilities to select~$F$ so that $(S,F) \in \mathcal{H}_I$. The latter follows from the observation that selecting $F$ from the edges of $G[V(S)]-S$ is equivalent to selecting the remaining edges of $G[V(S)]-S$, required to form a matching; note that there are at most~$k!$ matchings in $G[V(S)]-S$.

Given some hint~$H=(S,F) \in \mathcal{H}_I$,
we use the algorithm of Irving, Leather, and Gusfield~\cite{ILG87} to find a maximum-weight stable matching in~$G-S$ with respect to~$w_H$. Their method constructs a flow network based on the so-called rotation digraph of~$G-S$; this network has $N=O(|E|)$ vertices and \mbox{$M=O(|E|)$} arcs, as can be noted by inspecting the proof of Lemma 3.3.2 in~\cite{GI89}. A detailed analysis of Irving et al.'s algorithm~\cite{CDHZ12} shows that their method runs in $O(NM)$ time plus the time necessary for computing a maximum flow in the constructed network. The latter can be done in $O(NM)$ time as well, applying Orlin's algorithm~\cite{Orlin2013} (not yet available when~\cite{ILG87} and~\cite{GI89} were published). Thus, computing $M_H$ 
takes $O(NM)=O(|E|^2)$ time, implying that the total running time of Algorithm~\ref{alg:PO} is~$k! O(|E|^{k+2})=2^{O(k \log k)}|E|^{k+2}$.
\qed
\end{proof}

Theorem~\ref{thm:t1-instability} generalizes a result in~\cite{GJR+20} which shows that finding a matching in a bipartite graph that is larger by~$t'$ than a stable matching and has at most~$k$ blocking edges is $\mathsf{W}[1]$-hard with parameter~$t'+k$; we show that the parameterized hardness holds even if $t'$ is a constant, namely $t'=1$, and preferences admit a master list. 
As a consequence we obtain Corollary~\ref{cor:mupomu-t1}, showing that we cannot hope to solve \MUPOMU{} in FPT time with parameter~$k$. 
Thus, Algorithm~\ref{alg:PO} 
is roughly optimal.

\begin{reptheorem}{repthm_PO_W1hard}[$\star$]
\label{thm:t1-instability}
Given a bipartite preference system~$(G,\prec)$ with a stable matching $M_s$,
finding a matching in $(G,\prec)$ with at most~$k$ blocking edges that is larger than $M_s$ is $\mathsf{W}[1]$-hard with parameter~$k$, even if preferences in $(G,\prec)$ admit a master list on both sides and $|M_s|=|V(G)|/2-1$.
\end{reptheorem}

\begin{repcorollary}{repcor_PO}[$\star$]
\label{cor:mupomu-t1}
The \MUPOMU{} problem is $\mathsf{W}[1]$-hard with budget~$k$ as parameter, if 
\begin{itemize}
    \item \vspace{-3pt} the input graph $G$ is bipartite,
    \item preferences admit a master list on both sides,
    \item the cost function is $c \equiv 1$, 
    \item the utility function is $\omega \equiv 1$, and
    \item the objective value is $t=|M_s|+1=\frac{|V(G)|}{2}$  for a stable matching $M_s$ of $G$.
\end{itemize}
\end{repcorollary}

\section{Conclusion}

We studied the \MUPMU{} problem, and painted a detailed landscape of its computational complexity by showing how the utility and cost functions $\omega$ and $c$, the budget~$k$, our objective value~$t$, and various restrictions on the preferences affect the tractability of the problem.
We also made a brief detour into a relaxation where the requirement of popularity is replaced by Pareto-optimality. Table~\ref{tab:summary} summarizes our results. 

An interesting open question is whether the algorithm of Theorem~\ref{thm:mupmu-master} can be extended to preference domains that are, in some sense, close to admitting a master list. For example, can we efficiently solve instances where preferences of almost all vertices admit a master list?

\begin{table}[t]
\begin{center}
\resizebox{\textwidth}{!}{
\begin{tabular}{l@{\hspace{4pt}}c@{\hspace{4pt}}c@{\hspace{4pt}}c@{\hspace{4pt}}c@{\hspace{4pt}}c@{\hspace{4pt}}c}
	      \noalign{\hrule}
   &  \multicolumn{4}{c}{input settings}            
   & \multicolumn{2}{c}{result} 
\\
 problem     & graph   & preferences & costs & utilities &  complexity & theorem \\
	      \noalign{\hrule}
	      \noalign{\hrule}
  popular & general & general & $c \equiv 1,k=1$ & $ \omega$ fixed, $t=0$  & \NP-c & Cor.\!~\ref{cor:general}  \\
  popular & bipartite, $\Delta_G=3$ & SP, SC & $c\equiv 1,k=1$ & $\omega$ almost always 0, $t=1$   & \NP-c & Thm.\!~\ref{thm:k=1-bounded}  \\
  popular & bipartite, $\Delta_G=3$ & SP, SC & $c\equiv 1,k=1$ & $\omega \equiv 1, t=|M_s|+1$   & \NP-c & Thm.\!~\ref{thm:k=1-maxsize-bounded} \\
  popular & bipartite  & 1-ML & $c \geq 1$ &  &  $O(|E|^{k+1})$ alg. & Thm~\ref{thm:mupmu-master}  \\
  popular & bipartite  & 2-ML & $c \equiv 1$ & $ \omega$ almost always~0, $t=1$  &  $\mathsf{W}[1]$-h wrt~$k$ & Thm.\!~\ref{thm:masterlist-W1-hard}  \\
  popular & bipartite  & 2-ML & $c \equiv 1$ & $ \omega \equiv 1$, $t=|M_s|+1$  &  $\mathsf{W}[1]$-h wrt~$k$ & Thm.\!~\ref{thm:masterlist-W1-hard}  \\
  popular & bipartite  & 2-ML & $c$ binary, $k=0$ & $ \omega$ almost always~0, $t=1$  &  \NP-c & Thm.\!~\ref{thm:sociallystable}  \\
  popular & bipartite  & 2-ML & $c$ binary, $k=0$ & $\omega \equiv 1$, $t=|M_s|+1$  &  \NP-c & Thm.\!~\ref{thm:sociallystable}  \\
  popular w/ ties & bipartite, $\Delta_G=3$ & 2-ML & $c\equiv 1,k=1$ & $\omega$ almost always 0, $t=1$   & \NP-c & Thm.\!~\ref{thm:masterlists-ties}  \\%
  Pareto-opt & bipartite & general & $c \geq 1$   & &  $k! \cdot O(|E|^{k+2})$ alg. & Thm.\!~\ref{thm:t1-xpalgo}  \\
  Pareto-opt & bipartite & 2-ML & $c \equiv 1$ & $\omega \equiv 1, t=|M_s|+1$   &  $\mathsf{W}[1]$-h wrt~$k$ & 
            Cor.\!~\ref{cor:mupomu-t1} \\
      \noalign{\hrule}
    \end{tabular}
    }
\end{center}
\caption{Summary of our results. 
We simply write ``popular'' (``Pareto-opt'') to denote the problem
\myproblem{Max-Utility Popular (Pareto-Optimal) Matching with Instability Costs}, and add ``w/ ties'' for the variant with ties.
SP / SC / 1-ML / 2-ML mean preferences that are single-peaked /single-crossing / admit a master list on one side / on both sides, resp. 
We describe $\omega$ as ``almost always~0'' when 
$\omega$ is binary and takes value~1 only on a single edge.
We let $M_s$ denote a stable matching in the input.
}
\label{tab:summary}
\end{table}

\subsubsection{Acknowledgment.}
This work started at the Dagstuhl Seminar on Matching Under Preferences: Theory and Practice, thanks to Sushmita Gupta who initiated a research group on almost-stable popular matchings. We are grateful for Sushmita, Pallavi Jain, and Telikepalli Kavitha for fruitful discussions and helpful comments. 
We are especially indebted to Kavitha for her many contributions to the paper such as pointing out Observation~\ref{thm:generalG-k=1}, sharing with us her idea for a reduction in Remark~\ref{remark:altproof},  and for suggesting many of the questions we studied. 
\newpage

\bibliographystyle{abbrv}
\bibliography{mybib}

\begin{thebibliography}{10}

\bibitem{ABM06}
D.~J. Abraham, P.~Bir\'o, and D.~F. Manlove.
\newblock ``{A}lmost stable'' matchings in the roommates problem.
\newblock In T.~Erlebach and G.~Persiano, editors, {\em WAOA '05: Proceedings
  of the 3rd Workshop on Approximation and Online Algorithms}, volume 3879 of
  {\em Lecture Notes in Computer Science}, pages 1--14. Springer, 2006.

\bibitem{ABG+20}
H.~Aziz, P.~Bir{\'o}, S.~Gaspers, R.~de~Haan, N.~Mattei, and B.~Rastegari.
\newblock Stable matching with uncertain linear preferences.
\newblock {\em Algorithmica}, 82(5):1410--1433, 2020.

\bibitem{BM86}
J.~Bartholdi~III and M.~A. Trick.
\newblock Stable matching with preferences derived from a psychological model.
\newblock {\em Operations Research Letters}, 5(4):165--169, 1986.

\bibitem{BIM10}
P.~Bir\'o, R.~W. Irving, and D.~F. Manlove.
\newblock Popular matchings in the marriage and roommates problems.
\newblock In {\em CIAC '10: Proceedings of the 7th International Conference on
  Algorithms and Complexity}, volume 6078 of {\em Lecture Notes in Computer
  Science}, pages 97--108. Springer, 2010.

\bibitem{BMM12}
P.~Bir\'o, D.~F. Manlove, and E.~J. McDermid.
\newblock ``{A}lmost stable'' matchings in the roommates problem with bounded
  preference lists.
\newblock {\em Theoretical Computer Science}, 432:10--20, 2012.

\bibitem{BMM10}
P.~Bir\'o, D.~F. Manlove, and S.~Mittal.
\newblock Size versus stability in the marriage problem.
\newblock {\em Theoretical Computer Science}, 411:1828--1841, 2010.

\bibitem{BCF+20}
R.~Bredereck, J.~Chen, U.~Finnendahl, and R.~Niedermeier.
\newblock Stable roommates with narcissistic, single-peaked, and
  single-crossing preferences.
\newblock {\em Autonomous Agents and Multi-agent Systems}, 34(2):53--53, 2020.

\bibitem{BCFN20}
R.~Bredereck, J.~Chen, U.~P. Finnendahl, and R.~Niedermeier.
\newblock Stable roommates with narcissistic, single-peaked, and
  single-crossing preferences.
\newblock {\em Autonomous Agents and Multi-Agent Systems}, volume 34, article
  number 53, 2020.

\bibitem{BHK+20}
R.~Bredereck, K.~Heeger, D.~Knop, and R.~Niedermeier.
\newblock Multidimensional stable roommates with master list.
\newblock In {\em WINE '20: Proceedings of the 16th International Conference on
  Web and Internet Economics}, pages 59--73. Springer, 2020.

\bibitem{Che19}
J.~Chen.
\newblock Computational complexity of stable marriage and stable roommates and
  their variants.
\newblock {\em arXiv preprint arXiv:1904.08196}, 2019.

\bibitem{CHS+18}
J.~Chen, D.~Hermelin, M.~Sorge, and H.~Yedidsion.
\newblock How hard is it to satisfy (almost) all roommates?
\newblock In {\em ICALP 2018: Proceedings of the 45th International Colloquium
  on Automata, Languages, and Programming}. Schloss Dagstuhl-Leibniz-Zentrum
  fuer Informatik, 2018.

\bibitem{CDHZ12}
X.~Chen, G.~Ding, X.~Hu, and W.~Zang.
\newblock The maximum-weight stable matching problem: duality and efficiency.
\newblock {\em SIAM Journal on Discrete Mathematics}, 26(3):1346--1360, 2012.

\bibitem{CFK+22}
{\'A}.~Cseh, Y.~Faenza, T.~Kavitha, and V.~Powers.
\newblock Understanding popular matchings via stable matchings.
\newblock {\em SIAM Journal on Discrete Mathematics}, 36(1):188--213, 2022.

\bibitem{CFP21}
{\'A}.~Cseh, T.~Friedrich, and J.~Peters.
\newblock Pareto optimal and popular house allocation with lower and upper
  quotas.
\newblock {\em arXiv preprint arXiv:2107.03801}, 2021.

\bibitem{CIM19}
{\'A}.~Cseh, R.~W. Irving, and D.~F. Manlove.
\newblock The stable roommates problem with short lists.
\newblock {\em Theory of Computing Systems}, 63(1):128--149, 2019.

\bibitem{CyganEtAl2015}
M.~Cygan, F.~V. Fomin, {\L}.~Kowalik, D.~Lokshtanov, D.~Marx, M.~Pilipczuk,
  M.~Pilipczuk, and S.~Saurabh.
\newblock {\em Parameterized algorithms}.
\newblock Springer, Cham, 2015.

\bibitem{df99}
R.~Downey and M.~R. Fellows.
\newblock {\em Parameterized complexity}.
\newblock Monographs in Computer Science. Springer, New York, 1999.

\bibitem{EFLO15}
E.~Elkind, P.~Faliszewski, M.~Lackner, and S.~Obraztsova.
\newblock The complexity of recognizing incomplete single-crossing preferences.
\newblock In {\em AAAI 2015: Proceedings of the 29th AAAI Conference on
  Artificial Intelligence}, pages 865--871. AAAI Press, 2015.

\bibitem{FK21}
Y.~Faenza and T.~Kavitha.
\newblock Quasi-popular matchings, optimality, and extended formulations.
\newblock {\em Mathematics of Operations Research}, 2021.

\bibitem{FKPZ19}
Y.~Faenza, T.~Kavitha, V.~Powers, and X.~Zhang.
\newblock Popular matchings and limits to tractability.
\newblock In {\em SODA'19: Proceedings of the 30th Annual ACM-SIAM Symposium on
  Discrete Algorithms}, pages 2790--2809, 2019.

\bibitem{Fed92}
T.~Feder.
\newblock A new fixed point approach for stable networks and stable marriages.
\newblock {\em Journal of Computer and System Sciences}, 45:233--284, 1992.

\bibitem{Fed94}
T.~Feder.
\newblock Network flow and 2-satisfiability.
\newblock {\em Algorithmica}, 11:291--319, 1994.

\bibitem{FHRV09}
M.~R. Fellows, D.~Hermelin, F.~A. Rosamond, and S.~Vialette.
\newblock On the parameterized complexity of multiple-interval graph problems.
\newblock {\em Theoretical Computer Science}, 410:53--61, 2009.

\bibitem{FL20}
Z.~Fitzsimmons and M.~Lackner.
\newblock Incomplete preferences in single-peaked electorates.
\newblock {\em Journal of Artificial Intelligence Research}, 67:797--833, 2020.

\bibitem{FKPS10}
P.~Flor{\'e}en, P.~Kaski, V.~Polishchuk, and J.~Suomela.
\newblock Almost stable matchings by truncating the {G}ale-{S}hapley algorithm.
\newblock {\em Algorithmica}, 58(1):102--118, 2010.

\bibitem{GLMMRV07}
A.-T. Gai, D.~Lebedev, F.~Mathieu, F.~de~Montgolfier, J.~Reynier, and
  L.~Viennot.
\newblock Acyclic preference systems in {P2P} networks.
\newblock In A.~Kermarrec, L.~Boug{\'{e}}, and T.~Priol, editors, {\em Euro-Par
  '07: Proceedings of the 13th International Euro-Par Conference (European
  Conference on Parallel and Distributed Computing)}, volume 4641 of {\em
  Lecture Notes in Computer Science}, pages 825--834. Springer, 2007.

\bibitem{GS62}
D.~Gale and L.~S. Shapley.
\newblock College admissions and the stability of marriage.
\newblock {\em American Mathematical Monthly}, 69:9--15, 1962.

\bibitem{Gar75}
P.~G{\"a}rdenfors.
\newblock Match making: assignments based on bilateral preferences.
\newblock {\em Behavioural Science}, 20:166--173, 1975.

\bibitem{GJ79}
M.~R. Garey and D.~S. Johnson.
\newblock {\em Computers and Intractability}.
\newblock Freeman, San Francisco, CA., 1979.

\bibitem{GJR+20}
S.~Gupta, P.~Jain, S.~Roy, S.~Saurabh, and M.~Zehavi.
\newblock {On the (Parameterized) Complexity of Almost Stable Marriage}.
\newblock In {\em FSTTCS 2020: Proceedings of the 40th IARCS Annual Conference
  on Foundations of Software Technology and Theoretical Computer Science},
  volume 182 of {\em Leibniz International Proceedings in Informatics
  (LIPIcs)}, pages 24:1--24:17. Schloss Dagstuhl--Leibniz-Zentrum f{\"u}r
  Informatik, 2020.

\bibitem{GMSZ21}
S.~Gupta, P.~Misra, S.~Saurabh, and M.~Zehavi.
\newblock Popular matching in roommates setting is {N}{P}-hard.
\newblock {\em ACM Transactions on Computation Theory (TOCT)}, 13(2):1--20,
  2021.

\bibitem{GI89}
D.~Gusfield and R.~W. Irving.
\newblock {\em The Stable Marriage Problem: Structure and Algorithms}.
\newblock MIT Press, 1989.

\bibitem{HIM09}
K.~Hamada, K.~Iwama, and S.~Miyazaki.
\newblock An improved approximation lower bound for finding almost stable
  maximum matchings.
\newblock {\em Information Processing Letters}, 109:1036--1040, 2009.

\bibitem{HK13}
C.-C. Huang and T.~Kavitha.
\newblock Popular matchings in the stable marriage problem.
\newblock {\em Information and Computation}, 222:180--194, 2013.

\bibitem{HK21}
C.-C. Huang and T.~Kavitha.
\newblock Popularity, mixed matchings, and self-duality.
\newblock {\em Mathematics of Operations Research}, 46(2):405--427, 2021.

\bibitem{Irv85}
R.~W. Irving.
\newblock An efficient algorithm for the ``stable roommates'' problem.
\newblock {\em Journal of Algorithms}, 6:577--595, 1985.

\bibitem{ILG87}
R.~W. Irving, P.~Leather, and D.~Gusfield.
\newblock An efficient algorithm for the ``optimal'' stable marriage.
\newblock {\em Journal of the ACM}, 34:532--543, 1987.

\bibitem{IMS08}
R.~W. Irving, D.~F. Manlove, and S.~Scott.
\newblock The stable marriage problem with master preference lists.
\newblock {\em Discrete Applied Mathematics}, 156(15):2959--2977, 2008.

\bibitem{Kam19}
N.~Kamiyama.
\newblock Many-to-many stable matchings with ties, master preference lists, and
  matroid constraints.
\newblock In {\em AAMAS '19: Proceedings of the 18th International Conference
  on Autonomous Agents and MultiAgent Systems}, pages 583--591, 2019.

\bibitem{Kav14}
T.~Kavitha.
\newblock A size-popularity tradeoff in the stable marriage problem.
\newblock {\em SIAM Journal on Computing}, 43:52--71, 2014.

\bibitem{KKM+22}
T.~Kavitha, T.~Kir{\'a}ly, J.~Matuschke, I.~Schlotter, and
  U.~Schmidt-Kraepelin.
\newblock The popular assignment problem: when cardinality is more important
  than popularity.
\newblock In {\em SODA '22: Proceedings of the 2022 Annual ACM-SIAM Symposium
  on Discrete Algorithms}, pages 103--123. SIAM, 2022.

\bibitem{KNN14}
T.~Kavitha, M.~Nasre, and P.~Nimbhorkar.
\newblock Popularity at minimum cost.
\newblock {\em Journal of Combinatorial Optimization}, 27(3):574--596, 2014.

\bibitem{KMV94}
S.~Khuller, S.~Mitchell, and V.~Vazirani.
\newblock On-line algorithms for weighted bipartite matching and stable
  marriages.
\newblock {\em Theoretical Computer Science}, 127:255--267, 1994.

\bibitem{Kwa15}
A.~Kwanashie.
\newblock {\em Efficient algorithms for optimal matching problems under
  preferences}.
\newblock PhD thesis, University of Glasgow, 2015.

\bibitem{LMVGRM07}
D.~Lebedev, F.~Mathieu, L.~Viennot, A.-T. Gai, J.~Reynier, and
  F.~de~Montgolfier.
\newblock On using matching theory to understand {P2P} network design.
\newblock In {\em INOC '07: Proceedings of the 3rd International Network
  Optimization Conference}, 2007.

\bibitem{Man13}
D.~F. Manlove.
\newblock {\em Algorithmics of Matching Under Preferences}.
\newblock World Scientific, 2013.

\bibitem{MR20}
K.~Meeks and B.~Rastegari.
\newblock Solving hard stable matching problems involving groups of similar
  agents.
\newblock {\em Theoretical Computer Science}, 844:171--194, 2020.

\bibitem{Nie-book}
R.~Niedermeier.
\newblock {\em Invitation to Fixed-Parameter Algorithms}, volume~31 of {\em
  Oxford Lecture Series in Mathematics and its Applications}.
\newblock Oxford University Press, Oxford, 2006.

\bibitem{Nie10}
R.~Niedermeier.
\newblock Reflections on multivariate algorithmics and problem
  parameterization.
\newblock In {\em STACS 2010: Proceedings of the 27th International Symposium
  on Theoretical Aspects of Computer Science}, pages 17--32, 2010.

\bibitem{Orlin2013}
J.~B. Orlin.
\newblock Max flows in {$O(n m)$} time, or better.
\newblock In {\em STOC '13: Proceedings of the 45th Annual ACM Symposium on
  Theory of Computing}, page 765–774, 2013.

\bibitem{PPR08}
N.~Perach, J.~Polak, and U.~G. Rothblum.
\newblock A stable matching model with an entrance criterion applied to the
  assignment of students to dormitories at the {T}echnion.
\newblock {\em International Journal of Game Theory}, 36:519--535, 2008.

\bibitem{Rot92}
U.~G. Rothblum.
\newblock Characterization of stable matchings as extreme points of a polytope.
\newblock {\em Mathematical Programming}, 54:57--67, 1992.

\bibitem{TS97}
C.-P. Teo and J.~Sethuraman.
\newblock {LP} based approach to optimal stable matchings.
\newblock In M.~E. Saks, editor, {\em SODA '97: Proceedings of the 8th ACM-SIAM
  Symposium on Discrete Algorithms}, pages 710--719. ACM-SIAM, 1997.

\bibitem{TS98}
C.-P. Teo and J.~Sethuraman.
\newblock The geometry of fractional stable matchings and its applications.
\newblock {\em Mathematics of Operations Research}, 23:874--891, 1998.

\end{thebibliography}

\newpage

\appendix

\section{Appendix: Missing proofs}
\label{sec:appendix}
Here we provide all the proofs omitted from the main text due to lack of space, organized according to the sections from which they are omitted.

\subsection{Missing proofs from Section~\ref{sec:hardness}}
\label{sec:app-hardness}

We start with a weaker version of Theorem~\ref{thm:k=1-bounded}, whose proof provides a reduction that is the basis of all our reductions in this section. 

\begin{reptheorem}{repthm_k=1}
\label{thm:k=1} 
The \MUPMU{} problem is \NP-complete even if 
\begin{itemize}
    \item \vspace{-3pt} 
    the input graph $G$ is bipartite, 
    \item the cost function is $c \equiv 1$,
    \item the budget is $k=1$, 
    \item the utility function is  $\omega(e)=\left\{
      \begin{array}{ll}
      1 & \textrm{ if $e = f^\star$}, \\
      0 & \textrm{ otherwise,}
      \end{array}
      \right. \phantom{i}$ for some $f^\star \in E(G)$, and
    \item the objective value is $t=1$.
\end{itemize}
The \FUP{} problem is \NP-complete even if the input graph is bipartite and $|S|=1$ for the set $S$ of blocking edges. 
\end{reptheorem}


\begin{proof}
Since we can test both feasibility and popularity in polynomial time~\cite{HK13}, both problems are in \NP.
We now give a reduction from \myproblem{Exact-3-SAT}, the variant of 3-SAT where each clause contains exactly three literals. 
Let us consider an input formula $\varphi=c_1 \wedge c_2  \wedge \dots \wedge c_m$ over variables $x_1, \dots, x_n$. 

\paragraph{Construction.}
We construct an instance of \MUPMU{} with a graph $G=(A,B;E)$ as follows; see Figure~\ref{fig:hardness} for an illustration.

For each variable $x_i$, we create a 4-cycle $C^{x_i,\TT}$ on vertices $a^{x_i,\TT}_0$, $b^{x_i,\TT}_0$, $a^{x_i,\TT}_1$, and $b^{x_i,\TT}_1$, 
and another 4-cycle $C^{x_i,\FF}$ on vertices $a^{x_i,\FF}_0$, $b^{x_i,\FF}_0$, $a^{x_i,\FF}_1$, and $b^{x_i,\FF}_1$, representing the \texttt{true} and \texttt{false} value assignments for variable $x_i$, correspondingly.
Similarly, for each clause $c_j$, we construct a 4-cycle on $C^{c_j,\ell}$ on vertices $a^{c_j,\ell}_0$, $b^{c_j,\ell}_0$, $a^{c_j,\ell}_1$ and $b^{c_j,\ell}_1$ where $\ell \in [3]$\footnote{For an integer $i \in \mathbb{N}$, we let $[i]=\{1, 2, \dots, i\}$.}, representing the $\ell$-th literal in clause $c_j$. 
Furthermore, we add vertices of $\hat{A}=\{a^{c_1}, \dots, a^{c_m}\}$ corresponding to clauses, 
as well as vertices of $\hat{B}=\{b^{x_1}, \dots, b^{x_n}\}$ corresponding to variables.
We will also have four special vertices: $u$, $v$, $u'$ and $v'$. 
To define our vertex set, we let 
\begin{eqnarray*}
 A^x&=&\{a_0^{x_i,\lambda},a_1^{x_i,\lambda} : i \in [n] \textrm{ and } \lambda \in \{\TT,\FF\} \}, \\
 A^c&=&\{a_0^{c_j,\ell},a_1^{c_j,\ell} : j \in [m] \textrm{ and } \ell \in [3]\}, \\
 B^x&=&\{b_0^{x_i,\lambda},b_1^{x_i,\lambda} : i \in [n] \textrm{ and } \lambda \in \{\TT,\FF\} \}, \\
 B^c&=&\{b_0^{c_j,\ell},b_1^{c_j,\ell} : j \in [m] \textrm{ and } \ell \in [3]\}, \\
 A&=& A^x \cup A^c \cup \hat{A} \cup \{u,u'\},  \quad \textrm{and}\\
 B&=& B^x \cup B^c \cup \hat{B} \cup \{v,v'\}.
\end{eqnarray*}

To define the edge set of our bipartite graph $G=(A,B;E)$, in addition to the edges of the above defined 4-cycles, 
for each clause $c_j$ we add edges so that $G$ contains the 
path $$P_j=(u,v',a^{c_j,1}_0,b^{c_j,1}_0,a^{c_j,2}_0,b^{c_j,2}_0,a^{c_j,3}_0,b^{c_j,3}_0,a^{c_j}).$$
Similarly, for each variable $x_i$ we add edges so that $G$ contains the 
path $$Q_i=(v,u',b^{x_i,\TT}_0,a^{x_i,\TT}_0,b^{x_i,\FF}_0,a^{x_i,\FF}_0,b^{x_i}).$$
Additionally, we define a set $F$ of \emph{consistency edges}: for each clause $c_j$ and each~$\ell \in [3]$ 
we add an edge $(b^{c_j,\ell}_1,a^{x_i,\TT}_1)$ if the $\ell$-th literal in $c_j$ is $\overline{x_i}$,
and 
we add an edge $(b^{c_j,\ell}_1,a^{x_i,\FF}_1)$ if the $\ell$-th literal in $c_j$ is $x_i$. Note that we connect vertices corresponding to \emph{negative} literals with vertices corresponding to a \texttt{true} value  assignment for the given variable, and similarly,
we connect vertices corresponding to \emph{positive} literals with vertices corresponding to a \texttt{false} value  assignment for the given variable.
Finally, we create the edge~$e^\star=(u,v)$. 

We define the preference relation $\prec$ through the following preference lists, ordered in decreasing order of preference. Here and henceforth, any set in a preference list should be interpreted as a list of its elements ordered arbitrarily. 
\begin{longtable}{ll}
$u$: $v,v'$; & \\
$v$: $u,u'$; & \\
$u'$: $v,b^{x_1,\TT}_0, \dots, b^{x_n,\TT}_0$; \\
$v'$: $u,a^{c_1,1}_0, \dots, a^{c_m,1}_0$; \\
$a^{c_j,1}_0$: $b^{c_j,1}_1$, $v'$, $b^{c_j,1}_0$ & where  $j \in [m]$; \\
$a^{c_j,\ell}_0$: $b^{c_j,\ell}_1$, $b^{c_j,\ell-1}_0$, $b^{c_j,\ell}_0$ & where  $j \in [m]$ and $\ell \in \{2,3\}$; \\
$b^{c_j,\ell}_0$: $a^{c_j,\ell}_0$, $a^{c_j,\ell}_1$, $a^{c_j,\ell+1}_0 \qquad$ & where  $j \in [m]$ and $\ell  \in \{1,2\}$;\\
$b^{c_j,3}_0$: $a^{c_j,3}_0$, $a^{c_j,3}_1$, $a^{c_j}$ & where  $j \in [m]$; \\
$a^{c_j,\ell}_1$: $b^{c_j,\ell}_0$, $b^{c_j,\ell}_1$ & where  $j \in [m]$ and $\ell \in [3]$;  \\
$b^{c_j,\ell}_1$: $a^{c_j,\ell}_1$, $a^{x_i,\lambda}_1$, $a^{c_j,\ell}_0$ & where  $j \in [m]$, $\ell \in [3]$ and 
$(b^{c_j,\ell}_1,a^{x_i,\lambda}_1) \in F$; \\
$a^{x_i,\TT}_0$: $b^{x_i,\TT}_0$, $b^{x_i,\TT}_1$, $b^{x_i,\FF}_0$ & where  $i \in [n]$; \\
$a^{x_i,\FF}_0$: $b^{x_i,\FF}_0$, $b^{x_i,\FF}_1$, $b^{x_i}$ & where  $i \in [n]$; \\
$b^{x_i,\TT}_0$: $a^{x_i,\TT}_1$, $u'$, $a^{x_i,\TT}_0$ & where  $i \in [n]$; \\
$b^{x_i,\FF}_0$: $a^{x_i,\FF}_1$, $a^{x_i,\TT}_0$, $a^{x_i,\FF}_0$ & where  $i \in [n]$; \\
$a^{x_i,\lambda}_1$: $b^{x_i,\lambda}_1$, 
$N_{G[F]}(a^{x_i,\lambda})$,
$b^{x_i,\lambda}_0 \quad$ & where  $i \in [n]$; \\
$b^{x_i,\lambda}_1$: $a^{x_i,\lambda}_0$, $a^{x_i,\lambda}_1$ & where  $i \in [n]$;\\
$a^{c_j}$: $b_0^{c_j,3}$  & where  $j \in [m]$;\\
$b^{x_i}$: $a^{x_i,\FF}_0$ & where  $i \in [n]$.
\end{longtable}

\begin{figure}[!th]
\makebox[\textwidth][c]{
\includegraphics[scale=1]{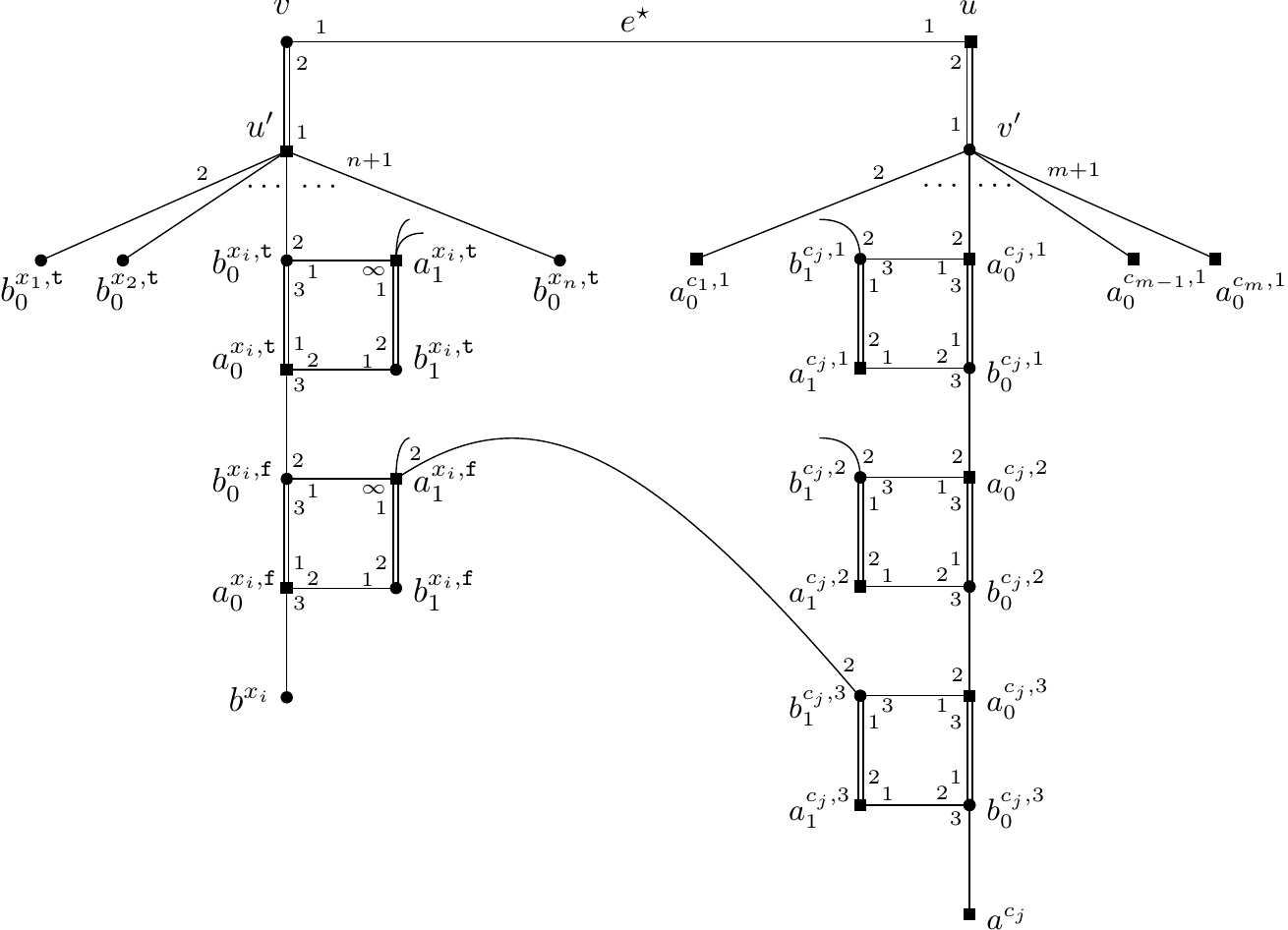}
}
\caption{Illustration of the reduction in the proof of Theorem~\ref{thm:k=1}. Edges of $M_0$ are depicted as double lines. 
The figure only shows one consistency edge, corresponding to a situation where the third literal in $c_j$ is $x_i$ as a positive literal.
}
\label{fig:hardness}
\end{figure}
We finish the construction by setting the cost function as $c \equiv 1$, our budget as $k=1$, the desired utility value as $t=1$, and the utility function as 
$$ \omega(e)=
\left\{
\begin{array}{ll}
1 & \qquad \textrm{ if $e=(u,v')$,} \\
0 & \qquad \textrm{ otherwise.}
\end{array}
\right.
$$

\paragraph{Correctness.}
First, let us observe that the utility of a matching~$M$ achieves our objective value $t=1$ if and only if it contains the edge~$(u,v')$. However, if $M$ contains $(u,v')$, then the edge $e^\star=(u,v)$ must be a blocking edge. Since $e^\star$ has cost~$1$ and our budget is $k=1$, the edge $e^\star$ must be the unique blocking edge. 
Therefore, if $M$ is feasible, then $\bp(M)=\{e^\star\}$. In fact, the converse is also true: if a popular matching is blocked solely by~$e^\star$, then by Observation~\ref{obs:characterization} it must match $u$, and hence must contain the edge~$(u,v')$. Thus, $(G,\prec,\omega,c,t,k)$ is a `yes'-instance of \MUPMU{} if and only if $(G,\prec,\{e^\star\})$ is a `yes'-instance of \FUP{}.

Let us define a matching $M_0$ that contains edges $(u,v')$ and $(u',v)$ and otherwise matches each vertex $a^{\sigma}_h \in A^x \cup A^c$ for any possible values of~$\sigma$ and~$h$ to the vertex~$b^{\sigma}_h \in B^x \cup B^c$ (see again Figure~\ref{fig:hardness}). 
To see that $M_0$ is stable in $G-e^\star$, note that vertices in $A^x \cup B^c \cup \{u',v'\}$  get their top choice in $M_0$, and they cover all edges (including consistency edges) in $G-e^\star$.
Observe that $M_0$ leaves exactly the vertices in $\hat{A} \cup \hat{B}$ unmatched.

We now show that $\varphi$ is satisfiable if and only if $(G,\prec,\omega,c,t,k)$ admits a feasible popular matching.

\paragraph{Direction ``$\Rightarrow$''.}
Assume that a truth assignment $\alpha:\{x_1,\dots,x_n\} \rightarrow \{\TT,\FF \}$ satisfies $\varphi$, where $\TT$ and $\FF$ stand for \texttt{true} and \texttt{false}, respectively.
Let $\tau(c_j)$ denote some 
literal in $c_j$ that is set to \texttt{true} by~$\alpha$.
We define a matching $M$ through determining its symmetric difference with $M_0$ as
$$M \triangle M_0=\left(\bigcup_{i=1}^n C^{x_i,\alpha(x_i)}\right) \cup \left(\bigcup_{j=1}^m C^{c_j,\tau(c_j)}\right).$$

We claim that $M$ is a popular matching in $G$ with $\bp(M)=\{e^\star\}$; we prove this by checking all conditions in Observation~\ref{obs:characterization}.

Condition~(c1) clearly holds, as $e^\star=(u,v)$ blocks $M$. To see condition~(c2), we show that no other edge blocks $M$.
Note that any blocking edge must be incident 
to some vertex in $V(M \triangle M_0)$. 
However, no edge contained in a cycle~$C^{x_i,\alpha(x_i)}$ can block~$M$, since $M$ assigns vertices of~$B$ in that cycle their top choice.
Similarly, no edge in a cycle~$C^{c_j,\tau(c_j)}$ may block~$M$, since $M$ assigns vertices of~$A$ in that cycle their top choice. Neither can edges of~$P_j$ or~$Q_i$ block~$M$, due to similar reasons.
Hence, any blocking edge must be a consistency edge. 
However, both endpoints of a consistency edge~$f \in F$ are assigned their top choice in~$M_0$, so $f$ can only  block~$M$ if both of its endpoints belong to~$V(M \triangle M_0)$. 
Let $f=(a^{x_i,\lambda},b^{c_j,\ell})$ be a consistency edge; then $x_i$ is the $\ell$-th variable in $c_j$. Suppose that both endpoints of $f$ are contained in $V(M \triangle M_0)$: then $\tau(c_j)=\ell$ and $\alpha(x_i)=\lambda$. 
However, by construction of $F$, if $c_j$ contains $x_i$ as a positive literal, then  $\lambda=\texttt{f}$, but then  setting $x_i$ to $\texttt{false}$ does not yield a true literal in $c_j$, contradicting $\alpha(x_i)=\lambda=\texttt{f}$. 
Similarly, if $c_j$ contains $x_i$ as a negative literal, then $\lambda=\texttt{t}$, but then setting $x_i$ to $\texttt{true}$ does not yield a true literal in $c_j$, contradicting $\alpha(x_i)=\lambda=\texttt{t}$.
Hence, $f \notin \bp(M)$.
Thus, $M$ is stable in~$G-e^\star$.

Let us now show that condition~(c3) holds. 
Let us call an edge $(a,b)$ where both~$a$ and~$b$ prefer their partner in~$M$ to each other a $(-,-)$ edge;
recall that $(G-e^\star)_M$ is obtained by deleting all $(-,-)$ edges from $G-e^\star$.
Clearly, the even-length $M$-alternating paths leading from vertices unmatched by $M$ (and by $M_0$) to $u$ are exactly the paths $P^\triangle_j:=P_j \triangle C^{c_j,\tau(c_j)}$, $j \in [m]$,
and similarly, the even-length $M$-alternating paths leading from vertices unmatched by $M$ to $v$ are exactly the paths $Q^\triangle_i:=Q_i \triangle C^{x_i,\alpha(x_i)}$, $i \in [n]$. 
Note that any path~$P^\triangle_j$ contains a $(-,-)$ edge w.r.t.\ $M$, namely the edge connecting $a^{c_j,\tau(c_j)}_0$ to its second choice, that is, $v'$ if $\tau(c_j)=1$ and $b_0^{c_j,\tau(c_j)-1}$ otherwise. 
Similarly, $Q^\triangle_i$ contains a $(-,-)$ edge w.r.t.\ $M$, namely the edge connecting $b^{x_i,\alpha(x_i)}_0$ to its second choice, that is, $u'$ if $\alpha(x_i)=\TT$ and~$a^{x_i,\TT}_0$ otherwise. 
So neither $P^\triangle_j$ nor $Q^\triangle_i$ is a path in $(G-e^\star)_M$ for any $j$ or $i$. Hence, condition (c3/i) holds.

Let us now prove that there is no $M$-alternating path in $(G-e^\star)_M$ from~$u$ to~$v$. Observe that any such path must have the following properties. 
\begin{itemize}
\item It starts with a subpath of $P_j$ for some $j \in [m]$, 
\item reaches the cycle $C^{c_j,\ell}$ for some $\ell \in [3]$ through the second choice of $a^{c_j,\ell}_0$ (let us call this edge $e^{c_j}$), 
\item then after traversing edges of this cycle goes through 
some consistency edge $(b^{c_j,\ell}_1,a^{x_i,\lambda})$ for some $\lambda \in \{\texttt{f,t}\}$ and $i \in [n]$, 
\item traverses edges of the cycle $C^{x_i,\lambda}$ and leaves it via the second choice of $b^{x_i,\lambda}_0$ (let us call this edge $e^{x_i}$). 
\end{itemize}
If $\ell=\tau(c_j)$, then 
$e^{c_j}$ is a $(-,-)$ edge with respect to~$M$. Similarly, if $\lambda = \alpha(x_i)$, then $e^{x_i}$ is a $(-,-)$ edge. 
If neither of these conditions hold, then $b_1^{c_j,\ell}$ and~$a_1^{x_i,\lambda}$ both get their top choice in $M$, implying that $ (b^{c_j,\ell}_1,a_1^{x_i,\lambda})$ is a $(-,-)$ edge. Hence, any $M$-alternating path from $u$ to $v$ contains a $(-,-)$ edge, and therefore condition~(c3/ii) holds as well, proving this direction of our reduction.

\paragraph{Direction ``$\Leftarrow$''.} Let us now suppose that $M$ is a popular matching in $G$ blocked only by edge $e^\star=(u,v)$. 
By Observation~\ref{obs:characterization}, $M$ is stable in $G-e^\star$, and so $(u,v')$ and $(v,u')$ are both in $M$. Also, since all stable matchings in $G-e^\star$ leave the same vertices unmatched, we know from the stability of $M_0$ in $G-e^\star$ that the set of unmatched vertices in $M$ is $\hat{A} \cup \hat{B}$.
Using this, a simple reckoning of the structure of~$G$ implies that $M$ contains exactly two edges of each 4-cycle, and does not contain any consistency edge. Note that for each $j \in [m]$ the cycle~$C^{c_j,\ell}$ must be contained in $M \triangle M_0$ for some $\ell \in [3]$, as otherwise $P_j$ would violate condition~(c3/i); we set $\tau(c_j)=\ell$ for such an $\ell$. 
Analogously, for each $i \in [n]$ the cycle~$C^{x_i,\lambda}$ must be contained in $M \triangle M_0$ for some $\lambda \in \{\TT,\FF\}$, as otherwise $Q_i$ would violate condition~(c3/i); 
we set $\alpha(x_i)=\lambda$.

We claim that the truth assignment $\alpha$ satisfies $\varphi$. 
To prove this, let us consider some~$j \in [m]$.
By the definition of $\tau$, we know that $M(b^{c_j,\tau(c_j)}_1)=a^{c_j,\tau(c_j)}_0$, which is the worst choice of $b^{c_j,\tau(c_j)}_1$.
Since the consistency edge $f$ incident to~$b^{c_j,\tau(c_j)}_1$ cannot block $M$ in $G-e^\star$ 
by condition~(c2), we obtain that $f$'s other endpoint, 
let us denote it by $a^{x_i,\lambda}_1$, must be matched to its top choice by $M$. Therefore, we get $\lambda \neq \alpha(x_i)$. 
By our definition of the consistency edges, this implies that the~\mbox{$\tau(c_j)$-th} literal in $c_j$ is set to \texttt{true} by $\alpha$,
finishing our proof.
\qed
\end{proof}

Theorem~\ref{thm:k=1-bounded} is the strengthening of Theorem~\ref{thm:k=1} for the case when the input graph has maximum degree~$3$, and preferences are single-peaked and single-crossing. 
To obtain this generalization, we reduce from a variant of 3-SAT where each variable occurs at most three times; additionally, we use a well-known technique where vertices~$a$ with $\delta(a) > 3$ are replaced by a path. Despite the conceptual simplicity of this modification, maintaining the crucial properties of the reduction while also ensuring that the constructed preference system is single-peaked and single-crossing
is a delicate task.

\repeattheorem{repthm_k=1bounded}

\begin{proof}
We give a reduction of the variant of 3-SAT where each clause has three literals and each variable occurs at most three times. 
We are going to modify the reduction presented in the proof of Theorem~\ref{thm:k=1}, re-using all definitions there.
See Figure~\ref{fig:hardness-d3} for an illustration.

\begin{figure}[th]
\makebox[\textwidth][c]{
\includegraphics[width=1.2\textwidth]{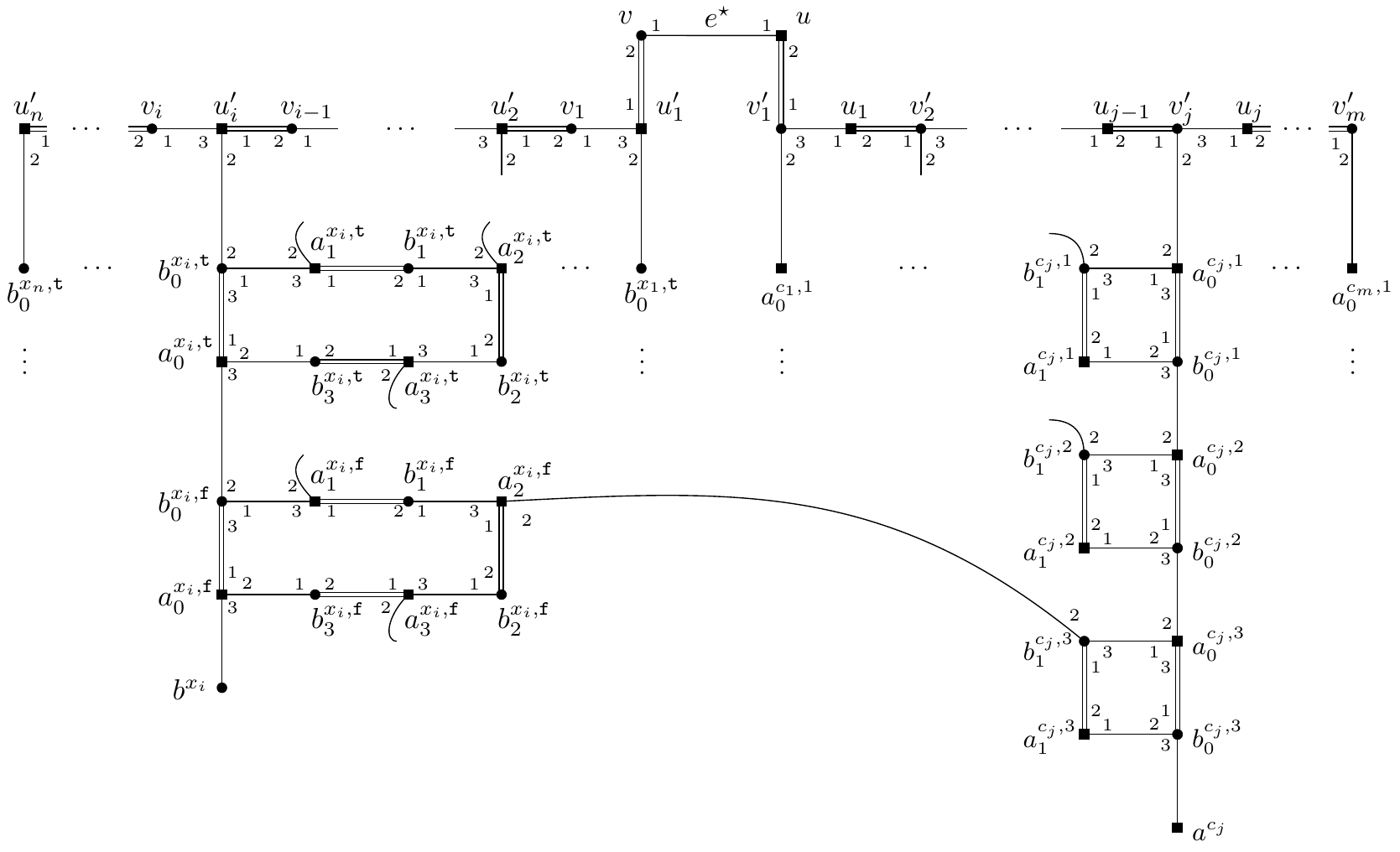}
}
\caption{Illustration of the reduction in the proof of Theorem~\ref{thm:k=1-bounded}. Edges of $\MM_0$ are depicted as double lines. 
The figure only shows one consistency edge, corresponding to a situation where the third literal in $c_j$ is $x_i$ as a positive literal.
}
\label{fig:hardness-d3}
\end{figure}

\paragraph{Construction.}
First, for each $i \in [n]$ and $\lambda \in \{\TT,\FF\}$
we replace $C^{x_i,\lambda}$ with an 8-cycle $\CC^{x_i,\lambda}$ containing vertices $a^{x_i,\lambda}_h$ and $b^{x_i,\lambda}_h$ for $h \in \{0,1,2,3\}$ (in the usual order). We replace the set $F$ with a new set $\F'$ of consistency edges as follows: 
for each $j \in [m]$ and $\ell \in [3]$, if the $\ell$-th literal of $c_j$ is the $h$-th occurrence of variable~$x_i$ for some $i \in [n]$ and $h \in [3]$, 
then we add the edge~$(a^{x_i,\lambda}_h,b^{c_j,\ell}_1)$ to~$\F'$, 
where $\lambda=\FF$ if the $h$-th occurrence of~$x_i$ is as a positive literal (the $\ell$-th literal in~$c_j$) and $\lambda=\TT$ otherwise.

Next, we replace $v'$ by a path $(v'_1, u_1, \dots, v'_{m-1},u_{m-1},v'_m)$,
replacing each edge $(v',a^{c_j,1}_0)$ with the edge $(v'_j,a^{c_j,1}_0)$.  
Analogously, we replace $u'$ by a path $(u'_1, v_1, \dots, u'_{n-1},v_{n-1},u'_n)$, 
replacing each edge $(u',b^{x_i,\TT}_0)$ with the edge $(u'_i,b^{x_i,\TT}_0)$.

The preferences of the vertices are as follows; we omit those vertices whose preferences are the same as defined in the proof of Theorem~\ref{thm:k=1}.

\begin{longtable}{ll}
$u$: $v,v'_1$; & \\
$v$: $u,u'_1$; & \\
$u'_i$: $v_{i-1}$, $b^{x_i,\TT}_0$, $v_i$ & where $i \in [n-1]$ and $v_0=v$; \\
$u'_n$: $v_{n-1}$, $b^{x_n,\TT}_0$ & \\
$v'_j$: $u_{j-1}$, $a^{c_j,1}_0$, $u_j$ & where $j \in [m-1]$ and $u_0=u$; \\
$v'_m$: $u_{m-1}$, $a^{c_m,1}_0$ & \\
$v_i$: $u'_i$, $u'_{i+1}$ & where $i \in [n-1]$; \\
$u_j$: $v'_j$, $v'_{j+1}$ & where  $j \in [m-1]$; \\[1pt]
$a^{c_j,1}_0$: $b^{c_j,1}_1$, $v'_j$, $b^{c_j,1}_0$ & where $j \in [m]$; \\
$b^{c_j,\ell}_1$: $a^{c_j,\ell}_1$, $a^{x_i,\lambda}_h$, $a^{c_j,\ell}_0$ & where $j \in [m]$, $\ell \in [3]$ and 
$(b^{c_j,\ell}_1,a^{x_i,\lambda}_h) \in \F'$; \\
$a^{x_i,\TT}_0$: $b^{x_i,\TT}_0$, $b^{x_i,\TT}_3$, $b^{x_i,\FF}_0$ & where $i \in [n]$; \\
$a^{x_i,\FF}_0$: $b^{x_i,\FF}_0$, $b^{x_i,\FF}_3$, $b^{x_i}$ & where $i \in [n]$; \\
$a^{x_i,\lambda}_h$: $b^{x_i,\lambda}_h$, $b^{c_j,\ell}_1$, $b^{x_i,\lambda}_{h-1}$ & where $i \in [n]$, $h \in [3]$ and $(b^{c_j,\ell}_1,a^{x_i,\lambda}_h) \in \F'$; \\
$a^{x_i,\lambda}_h$: $b^{x_i,\lambda}_h$,  $b^{x_i,\lambda}_{h-1}$ & where $i \in [n]$, $h \in [3]$ and $a^{x_i,\lambda}_h \notin V(\F')$; \\
$b^{x_i,\TT}_0$: $a^{x_i,\TT}_1$, $u'_i$, $a^{x_i,\TT}_0$ & where $i \in [n]$; \\
$b^{x_i,\lambda}_h$: $a^{x_i,\lambda}_{h+1 \! \! \mod 4}$, $a^{x_i,\lambda}_h \qquad$ & where $i \in [n]$ and $h \in [3]$.
\end{longtable}

Let $(\GG,\prec)$
be the preference system defined this way. Note that $\Delta_{\GG}=3$. 
To create an instance of \MUPMU{},
we set all utilities as~0 except for the edge $(u,v'_1)$ whose utility is~1, we set $c \equiv 1$, and we let $t=k=1$. 

\paragraph{Correctness.}
To show that $\GG$ admits a popular matching with $\bp(M)=\{e^\star\}$ if and only if $\varphi$ is satisfiable,
one can apply essentially the same arguments used in the proof of Theorem~\ref{thm:k=1};
for this, however, we need to define a new matching~$\MM_0$ instead of~$M_0$. 
To do so, for each $i \in [n]$ and $\lambda \in  \{\TT,\FF\}$
we first add the edges $(a^{x_i,\lambda}_2,b^{x_i,\lambda}_2)$ and $(a^{x_i,\lambda}_3,b^{x_i,\lambda}_3)$ to $M_0$.
Second, we replace the edge $(v,u')$ with  edges~$(v,u'_1)$ and~$(v_i,u'_{i+1})$ for each $i \in [n-1]$. 
Similarly, we replace the edge $(u,v')$ with edges~$(u,v'_1)$ and~$(u_j,v'_{j+1})$ for each~$j \in [m-1]$. 
Let $\MM_0$ denote the matching obtained this way. 
Note that $\MM_0$ is stable in~$\GG-e^\star$.

Now, the reasoning used in the proof of Theorem~\ref{thm:k=1} can be applied, with~$\MM_0$ taking the place of $M_0$, and 
with some trivial modifications where necessary.
Assuming that the preference system has the desired properties, 
this shows that \FUP{} is \NP-hard under the conditions stated in the theorem. 

To prove the result for \MUPMU{}, 
observe that $M$ is a popular matching in $\GG$ with~$\bp(M)=\{e^\star\}$ if and only if $M$ is a popular matching in $\GG$ whose utility is at least~$t=1$ and whose blocking edges have cost at most $k=1$ (here we use Observation~\ref{obs:characterization} again).

It remains to prove that the constructed preference system $(\GG,\prec)$ has the required properties.
Let $(\AA,\BB)$ denote the (unique) bipartition of $\GG$ where $\AA$ contains $u$ and $\BB$ contains~$v$.
 
  
  \paragraph{Single-peaked property.}
 To show single-peakedness, we define two lists
 for each $j \in [m]$ and also for each $i \in [n]$:
 \begin{eqnarray*}
 A^c_j&=&(a^{c_j},a_0^{c_j,3},a_1^{c_j,3},a_0^{c_j,2},a_1^{c_j,2},a_0^{c_j,1},a_1^{c_j,1}), \\
 B^c_j&=&(b_0^{c_j,3},b_1^{c_j,3},b_0^{c_j,2},b_1^{c_j,2},b_0^{c_j,1},b_1^{c_j,1}), \\
 A^x_i&=&(a_3^{x_i,\TT},a_2^{x_i,\TT},a_1^{x_i,\TT},a_0^{x_i,\TT},a_3^{x_i,\FF},a_2^{x_i,\FF},a_1^{x_i,\FF},a_0^{x_i,\FF}), \\
 B^x_i&=&(b_3^{x_i,\TT},b_2^{x_i,\TT},b_1^{x_i,\TT},b_0^{x_i,\TT},b_3^{x_i,\FF},b_2^{x_i,\FF},b_1^{x_i,\FF},b_0^{x_i,\FF},b^{x_i}). 
 \end{eqnarray*}
 Then we can define an axis for each of $\AA$ and $\BB$ as follows:
 \begin{equation}
 \label{axis-1}
    u_{m-1},\dots,u_1,A^c_1,\dots,A^c_m,u,u'_1,\dots,u'_n,A^x_1,\dots,A^x_n; \tag{Axis $\AA$} \notag
 \end{equation}
 \begin{equation}
 \label{axis-2}
     B^c_1,\dots,B^c_m, v'_m, \dots, v'_1,v,B^x_1,\dots,B^x_n,v_1,\dots,v_{m-1}.  \tag{Axis $\BB$} \notag
 \end{equation}
 It is straightforward to check that each vertex has single-peaked preferences with respect to the axis containing its neighbors. Note that the preferences of a degree-2 vertex are always single-peaked, while the preferences of a vertex~$a$ with $\delta(a) = 3$ are single-peaked w.r.t.\ a given axis exactly if the least-preferred neighbor of~$a$ does not lie between its two other neighbors on the axis.
 
 \paragraph{Single-crossing property.}
  To show that the preference system $(\GG,\prec)$ is single-crossing, we define a complete bipartite strict preference system $(K,\prec^K)$ compatible with~$(\GG,\prec)$ that is single-crossing; here $K$ is the complete bipartite graph whose two partitions are $\AA$ and~$\BB$.
 
 Let us define the following vertex sets and lists of vertices: 
 \begin{longtable}{llll}
 $A_h^{\ell}$ & = & $\{a_h^{c_j,\ell}:j \in [m]\} \qquad$ 
 & for each  $h \in \{0,1\}$, $\ell \in [3]$; \\
 $A_h^{\lambda}$ & = & $\{a_h^{x_i,\lambda}: i \in [n]\}$
 & for each $h \in \{0,1,2,3\}$, $\lambda \in \{\TT,\FF\}$; \\
 $B_h^{\ell}$ & = & $\{b_h^{c_j,\ell}:j \in [m]\}$
 & for each $h \in \{0,1\}$, $\ell \in [3]$; \\
 $B_h^{\lambda}$ & = & $\{b_h^{x_i,\lambda}: i \in [n]\}$
 & for each  $h \in \{0,1,2,3\}$, $\lambda \in \{\TT,\FF\}$; \\
 $U'$ & = & $(u'_1, \dots, u'_n)$; \\
 $V'$ & = & $(v'_1, \dots, v'_m)$; \\
 $U_{<j}$ & = & $(u_1, \dots, u_{j-1})$
 & for each $j \in [m]$; \\[2pt]
 $U_{\geq j}$ & = & $(u_j, \dots, u_{m-1})$ 
 & for each $j \in [m-1]$; \\[2pt] 
 $V_{<i}$ & = & $(v_1, \dots, v_{i-1})$
 & for each $i \in [n]$; \\[2pt]
 $V_{\geq i}$ & = & $(v_i, \dots, v_{n-1})$ 
 & for each $i \in [n-1]$.
 \end{longtable}
 
Next, we define the preferences in~$(K,\prec^K)$. We deal with the two sides separately, so first we define~$\prec^K_b$ for each vertex~$b \in \BB$. 
Each such~$\prec^K_b$
will be one of $m+2$ 
complete total orders over $\AA$: these are $\pi_0^j$ for each $j \in [m]$, $\pi_1$ and $\pi_2$, as they are defined below. We set $\pi_0=\pi^m_0$, and for any set appearing in the definition of these orders, we fix one arbitrary ordering (used in all of the lists below). 
To help the reader, 
at each row we highlighted in bold those vertex sets that have to be ``moved'' in order to obtain the next row.

\smallskip
\begin{center}
\begin{tabular}{ll}
$\pi_0^j$: & 
$u, U_{<j},A_0^1,A_1^1,A_0^2,A_1^2,A_0^3,A_1^3,\hat{A},\bm{U_{\geq j}},A^\FF_3,A^\FF_2,A^\FF_1,A^\TT_3,A^\TT_2,A^\TT_1,U',A^\TT_0,A^\FF_0;$ \\
$\pi_0$: & 
$u, U_{<m},\bm{A_0^1},A_1^1,\bm{A_0^2},A_1^2,\bm{A_0^3},A_1^3,\bm{\hat{A}},A^\FF_3,A^\FF_2,A^\FF_1,A^\TT_3,A^\TT_2,A^\TT_1,U',A^\TT_0,A^\FF_0;$ \\
$\pi_1$: & 
$u, U_{<m},A_1^1,A_1^2,A_1^3,A^\FF_3,A^\FF_2,A^\FF_1,A^\TT_3,A^\TT_2,A^\TT_1,U',\bm{A^\TT_0},\bm{A^\FF_0},A_0^1,A_0^2,A_0^3,\hat{A};$ \\
$\pi_2$: & 
$u, U_{<m},A_1^1,A_1^2,A_1^3,A^\TT_0,A^\FF_0,A^\FF_3,A^\FF_2,A^\FF_1,A^\TT_3,A^\TT_2,A^\TT_1,U',A_0^1,A_0^2,A_0^3,\hat{A}.$ 
\end{tabular}
\end{center}

The following claims follow directly from the definitions: 
\begin{itemize}
    \item the preference list of $v'_j$ in $\GG$ is a restriction of~$\pi^j_0$ for any $j \in [m]$;
    \item the preference list of $b_0^{c_j,\ell}$ in $\GG$ is a restriction of~$\pi_0$ for any $j \in [m],\ell \in [3]$;
    \item the preference list of $b_3^{x_i,\lambda}$ in $\GG$ is a restriction of~$\pi_2$ for any $i \in [n], \lambda \in \{\TT,\FF\}$;
    \item the preference list of any other vertex of $\BB$ in $\GG$ is a restriction of $\pi_1$.
\end{itemize}
This means that each preference relation $\prec_b$, $b \in \BB$, is compatible with one of the preference relations in $\Pi=\{\pi^j_0 \mid j \in [m]\} \cup \{\pi_1,\pi_2\}$. 
Hence, to prove that $(\GG,\prec)$ is single-crossing, it suffices to prove that $(K,\prec^K)$ is single-crossing.\footnote{Of course, we have so far only looked at the preferences of vertices in $\BB$. The same has to be argued separately for vertices of $\AA$; this will be done later in the proof.}

Instead of an ordering of $\BB$ with respect to which we could  prove the single-crossing property of $(K,\prec^K)$ for vertices of $\BB$, it clearly suffices to order the elements of $\Pi$. 
Thus, we provide an ordering $\triangleright$ over $\Pi$ as $\pi_0^1 \triangleright \dots \triangleright \pi_0^m=\pi_0 \triangleright \pi_1 \triangleright \pi_2$. 
It is now straightforward to verify that for any two distinct vertices~$a',a'' \in \AA$,  all preference relations in $\Pi^{a' \prec a''}=\{\pi \in \Pi \mid \textrm{$a''$ precedes $a'$ in $\pi$}\}$ are followed by 
all preference relations in 
$\Pi^{a'' \prec a'}=\{\pi \in \Pi \mid \textrm{$a'$ precedes $a''$ in $\pi$}\}$ 
according to~$\triangleright$, or vice versa. This proves that the single-crossing property holds for vertices of~$\BB$ in~$(K,\prec^K)$.

To deal with vertices of $\AA$ in an analogous way, we define $\prec^K_a$ for each $a \in \AA$ to be one of $n+4$
complete total orders over $\BB$: these are $\phi_0^i$ for each $i \in [n]$, $\phi_1$, $\phi_2$, $\phi_3$ and $\phi_4$,  as they are defined below. We set $\phi_0=\phi^n_0$, and for any set appearing below in these orders, we fix one arbitrary ordering (used in all of the lists below). Again, changes from one row to the next are highlighted in bold.

\smallskip
\begin{center}
\begin{tabular}{ll}
$\phi_0^i$: & 
$v,V_{<i},B_3^\TT,B_3^\FF,B_2^\TT,B_2^\FF,B_1^\TT,B_1^\FF,B_1^3,B_1^2,B_1^1,V',B_0^1,B_0^2,B_0^3,B_0^\TT,B_0^\FF,\hat{B},\bm{V_{\geq i}}$; \\
$\phi_0$: & 
$v,V_{<n},B_3^\TT,B_3^\FF,B_2^\TT,B_2^\FF,\bm{B_1^\TT},\bm{B_1^\FF},B_1^3,B_1^2,B_1^1,V',B_0^1,B_0^2,B_0^3,B_0^\TT,B_0^\FF,\hat{B}$; \\
$\phi_1$: & 
$v,V_{<n},B_3^\TT,B_3^\FF,\bm{B_2^\TT},\bm{B_2^\FF},B_1^3,B_1^2,B_1^1,V',B_0^1,B_0^2,B_0^3,B_1^\TT,B_1^\FF,B_0^\TT,B_0^\FF,\hat{B}$; \\
$\phi_2$: & 
$v,V_{<n},\bm{B_3^\TT},\bm{B_3^\FF},B_1^3,B_1^2,B_1^1,V',B_0^1,B_0^2,B_0^3,B_2^\TT,B_2^\FF,B_1^\TT,B_1^\FF,B_0^\TT,B_0^\FF,\hat{B}$; \\
$\phi_3$: & 
$v,V_{<n},B_1^3,B_1^2,B_1^1,V',\bm{B_0^1},\bm{B_0^2},\bm{B_0^3},B_3^\TT,B_3^\FF,B_2^\TT,B_2^\FF,B_1^\TT,B_1^\FF,\bm{B_0^\TT},\bm{B_0^\FF},\hat{B}$; \\
$\phi_4$: & 
$v,V_{<n},B_0^1,B_0^2,B_0^3,B_1^3,B_1^2,B_1^1,V',B_0^\TT,B_3^\TT,B_0^\FF,B_3^\FF,B_2^\TT,B_2^\FF,B_1^\TT,B_1^\FF,\hat{B}$.
\end{tabular}
\end{center}

The following claims follow directly from the definitions: 
\begin{itemize}
    \item the preference list of $u'_i$ in $\GG$ is a restriction of~$\phi^i_0$ for any $i \in [n]$;
    \item the preference list of $a_1^{x_i,\lambda}$ in $\GG$ is a restriction of~$\phi_0$ for any $i \in [n],\lambda \in \{\TT,\FF\}$;
    \item the preference list of $a_2^{x_i,\lambda}$ in $\GG$ is a restriction of~$\phi_1$ for any $i \in [n],\lambda \in \{\TT,\FF\}$;
    \item the preference list of $a_3^{x_i,\lambda}$ in $\GG$ is a restriction of~$\phi_2$ for any $i \in [n],\lambda \in \{\TT,\FF\}$;
    \item the preference list of $a_0^{x_i,\lambda}$ in $\GG$ is a restriction of~$\phi_4$ for any $i \in [n], \lambda \in \{\TT,\FF\}$;
    \item the preference list of $a_1^{c_j,\ell}$ in $\GG$ is a restriction of~$\phi_4$ for any $j \in [m], \ell \in [3]$;
    \item the preference list of any other vertex in $\GG$ is a restriction of $\phi_3$.
\end{itemize}
This means that each preference relation $\prec_a$, $a \in \AA$, is compatible with one of the preference relations in $\Phi=\{\phi^i_0 \mid i \in [n]\} \cup \{\phi_1,\phi_2, \phi_3, \phi_4 \}$. 
Hence, to prove that $(\GG,\prec)$ is single-crossing with regard to vertices of $\AA$, it suffices to prove that $(K,\prec^K)$ is single-crossing with regard to vertices of $\AA$.

Instead of an ordering of $\AA$ with respect to which we could  prove the single-crossing property of $(K,\prec^K)$ for vertices of $\AA$, it clearly suffices to order the elements of $\Phi$. 
Therefore, we provide an ordering $\triangleright$ over $\Phi$ as
$\phi_0^1 \triangleright \dots \triangleright \phi_0^n=\phi_0 \triangleright \phi_1 \triangleright \phi_2 \triangleright \phi_3 \triangleright \phi_4$.
It is now straightforward to verify that for any two distinct vertices~$b',b'' \in \BB$,  all preference relations in $\Phi^{b' \prec b''}=\{\phi \in \Phi \mid \textrm{$b''$ precedes $b'$ in $\phi$}\}$ are followed by 
all preference relations in 
$\Phi^{b'' \prec b'}=\{\phi \in \Phi \mid \textrm{$b'$ precedes $b''$ in $\phi$}\}$ 
according to~$\triangleright$, or vice versa. This proves that the single-crossing property holds for vertices of~$\AA$ in~$(K,\prec^K)$ as well, 
finishing our proof.
\qed
\end{proof}

\repeattheorem{repthm_k=1maxsize}

\begin{proof}
Again, we give a reduction from the variant of 3-SAT where each clause has three literals and each variable occurs at most three times. 
We are going to modify the reduction presented in the proof of Theorem~\ref{thm:k=1-bounded}, re-using definitions there, and placing emphasis on condition~(c3/ii) instead of condition~(c3/i) of Observation~\ref{obs:characterization}.
See Figure~\ref{fig:hardness-maxsize} for an illustration.

\begin{figure}[!th]
\makebox[\textwidth][c]{
\includegraphics[scale=1]{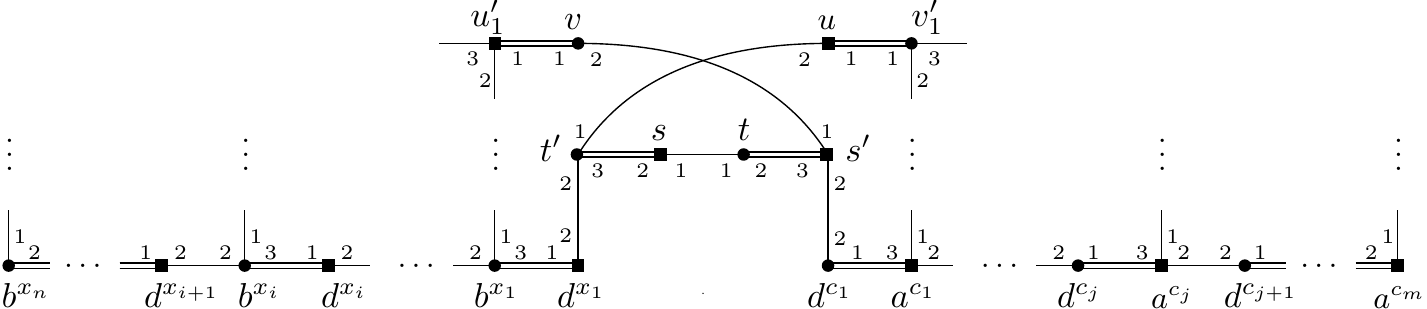}
}
\caption{Illustration of the reduction in the proof of Theorem~\ref{thm:k=1-maxsize-bounded}, depicting the part of graph $\hat{G}$ where it differs from graph $\GG$ (shown in Figure~\ref{fig:hardness-d3}). 
Double lines denote the matching~$\hat{M}_0$.
}
\label{fig:hardness-maxsize}
\end{figure}

\paragraph{Construction.}
We start with 
adding a  set~$D^X \cup D^C \cup \{s,s',t,t'\}$ of new vertices to $\GG$
where $D^X=\{d^{x_i}:i \in [n]\}$
and $D^C=\{d^{c_j}:j \in [m]\}$. Among the vertices of $\GG$, only those in $\hat{A} \cup \hat{B} \cup \{u,v\}$ will be adjacent to newly introduced vertices; 
the preferences of all other vertices remain unchanged. 
The preferences of the new vertices and of those that are adjacent to them are as follows.
\smallskip


\begin{tabular}{ll}
$s$: $t, t'$; \\$t$: $s, s'$; \\
$s'$: $v, d^{c_1}, t$; \\
$t'$: $u, d^{x_1}, s$; \\
$u$: $v'_1,t'$;\\
$v$: $u'_1,s'$; \\
$d^{x_1}$: $b^{x_1}, t'$; \\
$d^{x_i}$: $b^{x_i},  b^{x_{i-1}}$  & where $i \in \{2, \dots, n\}$; \\
$b^{x_i}$: $a_0^{x_i,\FF}, d^{x_{i+1}}, d^{x_i} \qquad \qquad\qquad \qquad$ & where $i \in [n-1]$; \\
$b^{x_n}$: $a_0^{x_n,\FF}, d^{x_n}$; \\
$d^{c_1}$: $a^{c_1}, s'$; \\
$d^{c_j}$: $a^{c_j}, a^{c_{j-1}}$ & where $j \in \{2, \dots, m\}$; \\
$a^{c_j}$: $b_0^{c_j,3}, d^{c_{j+1}}, d^{c_j}$ & where $j \in [m-1]$; \\
$a^{c_m}$: $b_0^{c_m,3}, d^{c_m}$. 
\end{tabular}
\smallskip

As promised in the theorem, we set both the cost function and the utility function to be uniformly~$1$, and we set $t=|M_s|+1$ and~$k=1$. 

\paragraph{Correctness.}
Recall the matching~$\MM_0$ defined for the graph~$\GG$ (see Figure~\ref{fig:hardness-d3}); observe that~$\MM_0$ is a matching in~$\hat{G}$ as well. We define a matching~$M_s$ by adding the edges~$\{(d^{x_i},b^{x_i}):i \in [n]\} \cup \{(d^{c_j},a^{c_j}):j \in [m]\}  \cup \{(s,t)\}$ to~$\MM_0$. 
It is straightforward to verify that $M_s$ is stable in~$\hat{G}$. Note that the only vertices left unmatched  by $M_s$ are $s'$ and $t'$. 
Therefore a matching is feasible if and only if it is complete and has at most one blocking edge. 

We are going to show that there is a feasible and popular matching in~$\hat{G}$ if and only if the input formula~$\varphi$ is satisfiable.

\paragraph{Direction ``$\Rightarrow''$.}
Suppose that $M$ is a feasible popular matching in~$\hat{G}$. 
We claim that $M(s')=t$ and $M(t')=s$.

First, if $M(t')=u$, then $(u,v'_1)$ blocks~$M$. 
Second, if $M(t')=d^{x_1}$, then at least one edge incident to some vertex of $D^X$ must block~$M$: this follows from an observation that $M$ cannot connect a vertex $b^{x_i}$ to its top choice (i.e., $a_0^{x_i,\FF}$). To see this, suppose that $(a_0^{x_i,\FF},b^{x_i}) \in M$.
This means that $(a_0^{x_i,\FF},b_3^{x_i,\FF}) \in \bp(M)$, so no other edge in~$\CC^{x_i,\FF}$ can block~$M$; 
however, this leads to \mbox{$(b_h^{x_i,\FF},a_h^{x_i,\FF}) \in M$} for~$h=3,2,1$, implying also $(b_0^{x_i,\FF},a_0^{x_i,\TT}) \in M$. 
This in turn yields that the edge~$(a_0^{x_i,\TT},b_3^{x_i,\TT})$ blocks~$M$, a contradiction. 
Hence, if $M(t') \neq s$, then at least one edge in~$\bp(M)$ is incident to a vertex of~$D^X \cup \{u\}$. 

Analogously, the same arguments yield that if $M(s') \neq t$, then at least one edge in $\bp(M)$ is incident to a vertex of $D^C \cup \{v\}$. Hence, by $|\bp(M)|\leq 1$, and since no vertex of $D^X \cup \{u\}$ is adjacent to a vertex of $D^C \cup \{v\}$, at least one of $(s',t)$ and $(s,t')$ is in $M$. 
This implies $\bp(M)=\{(s,t)\}$, and consequently, 
$\{(s,t'),(s',t)\} \subseteq M$
because $M$ is complete.

Since $M$ is complete,
 $d^{x_1}$ must be matched in~$M$, yielding $(d^{x_1},b^{x_1}) \in M$. Applying the same argument repeatedly, we get $(d^{x_i},b^{x_i}) \in M$  for $i=2,\dots, n$ as well. Similarly, we get $(d^{c_j},a^{c_j}) \in M$ for each $j \in [m]$, 
and also $(u'_i,v_{i-1}) \in M$ for each $i \in [n]$ 
and $(v'_j,u_{j-1}) \in M$ for each $j \in [m]$ 
where $v_0=v$ and $u_0=u$.  

Let us define a matching~$\hat{M}_0=M_s \setminus \{(s,t)\} \cup \{(s,t'),(s',t)\}$; see again Figure~\ref{fig:hardness-maxsize}. By the observations of the previous paragraph and using that $M$ is complete, it follows that $M \triangle \hat{M}_0$ is the union of cycles of the form~$\CC^{x_i,\lambda}$ and~$C^{c_j,h}$. 
Moreover, we claim that for each $i \in [n]$ there exists some $\lambda \in \{\TT,\FF\}$ such that 
$\CC^{x_i,\lambda} \subseteq M \triangle \hat{M}_0$, and that for each $j \in [m]$ there exists some $h \in [3]$
such that 
$C^{c_j,h} \subseteq M \triangle \hat{M}_0$.
To see this for some $i \in [n]$, consider the path 
\begin{multline}
\notag
\hat{P}_i=s,t',d^{x_1},b^{x_1},\dots, d^{x_i},b^{x_i},a_0^{x_i,\FF},
b_0^{x_i,\FF},a_0^{x_i,\TT}, b_0^{x_i,\TT},
\\
u'_i, v_{i-1, }, u'_{i-1},\dots, v_1,u'_1,v, s',t.
\end{multline}
Note that there are no $(-,-)$ edges on $\hat{P}_i$ with respect to the matching~$\hat{M}_0$. By Observation~\ref{obs:characterization}, this path cannot be present in $(\hat{G}-(s,t))_M$, implying that $M$ cannot contain all edges of $\hat{M}_0 \cap \hat{P}_i$. 
Thus, indeed there exists some $\lambda \in \{\TT,\FF\}$ for which $\CC^{x_i,\lambda} \subseteq M \triangle \hat{M}_0$; we define $\alpha(x_i)$ to be such a value of $\lambda$. 

Arguing analogously about the path 
\begin{multline}
\notag
\hat{Q}_j=t,s',d^{c_1},a^{c_1},\dots, d^{c_j},a^{c_j},b_0^{c_j,3},
a_0^{c_j,3},b_0^{c_j,2},
a_0^{c_j,2},b_0^{c_j,1},
a_0^{c_j,1},\\
v'_j, u_{j-1, }, v'_{j-1},\dots, u_1,v'_1,u, t',s,
\end{multline}
we get that for each $j \in [m]$ there exists some $h \in [3]$ for which $C^{c_j,h} \subseteq M \triangle \hat{M}_0$; we define $\tau(c_j)$ to be such a value  $h$.

Based on the fact that no consistency edge can block~$M$, one can argue in the same way as in the proof of Theorems~\ref{thm:k=1} and~\ref{thm:k=1-bounded} to prove that $\alpha$ is a truth assignment satisfying the input formula~$\varphi$. 

\paragraph{Direction ``$\Leftarrow$''.}
Assume that $\alpha:\{x_1,\dots,x_n\} \rightarrow \{\TT,\FF\}$ is a truth assignment that satisfies~$\varphi$. 
Let $\tau(c_j)=h$ if the $h$-th literal in clause~$c_j$ is set to true by~$\alpha$ (any such value $h$ works). 
Then we define $M$ through determining 
its symmetric difference with $\hat{M}_0$ as
$$M \triangle \hat{M}_0=\left(\bigcup_{i=1}^n C^{x_i,\alpha(x_i)}\right) \cup \left(\bigcup_{j=1}^m C^{c_j,\tau(c_j)}\right).$$

It is easy to see that $M$ is feasible, in particular, $\bp(M)=\{(s,t)\}$.
Using Observation~\ref{obs:characterization}, we can  show that $M$ is popular as well.
It is straightforward to check that conditions (c1), (c2), and (c3/i) of Observation~\ref{obs:characterization} hold. 
To see that (c3/ii) holds as well, let $P$ be an $M$-alternating path $P$ starting with~$(s,t')$ and ending with~$(s',t)$. 
First note that if $P$ contains no consistency edge, then it must contain a $(-,-)$ edge w.r.t.\ $M$: 
either an edge connecting some $b_0^{x_i,\alpha(x_i)}$ with its second choice (let us denote this edge by $e^{x_i}$), 
or an edge connecting some $a_0^{c_j,\tau(c_j)}$ with its second choice (let us denote this edge by $e^{c_j}$). 
If $P$ does contain some consistency edge~$f$, then either $f$ itself is a $(-,-)$ edge or it must be adjacent to an edge in some cycle~$C$ in $M\triangle \hat{M}_0$. 
Thus, either $C=\CC^{x_i,\alpha(x_i)}$ for some~$i \in [n]$, or $C=C^{c_j,\tau(c_j)}$ for some $j \in [m]$.
In both cases we can identify a $(-,-)$ edge on~$P$, namely $e^{x_i}$ in the former case, and $e^{c_j}$ in the latter case. This proves that $M$ is indeed popular, and hence the reduction is correct.

It remains to prove that the constructed preference system~$(\hat{G},\hat{\prec})$ is single-peaked and single-crossing. 

\paragraph{Single-peaked property.}
Observe that all newly added vertices except for $s'$ and $t'$ have degree~$2$ in $\hat{G}$, as do $u$ and $v$ as well. 
Hence, these vertices have preferences that are trivially single-peaked with respect to any axis. To deal with $s'$, $t'$, and the vertices in $\hat{A} \cup \hat{B}$, it suffices to append the vertices $d^{x_n},d^{x_{n-1}}, \dots, d^{x_1},s,s'$ in this order
after \ref{axis-1}, and similarly, to append
$d^{c_m},d^{c_{m-1}}, \dots, d^{c_1},t,t'$ in this order
after \ref{axis-2}.
It can easily be verified that $(\hat{G},\hat{\prec})$ is single-peaked with respect to the two axis obtained this way.

\paragraph{Single-crossing property.}
To show that $(\hat{G},\hat{\prec})$ is single-crossing, recall the orders in~$\Pi$ and in~$\Phi$ as defined in the proof of Theorem~\ref{thm:k=1-bounded}.
We apply two modifications to each $\pi \in \Pi$: first, we append the vertices $d^{x_n},d^{x_{n-1}}, \dots, d^{x_1},s,s'$ to $\pi$ in this order; 
second, we fix the order of the vertices in $\hat{A}$
as $a^{c_m}, a^{c_{m-1}}, \dots, a^{c_1}$. 
Similarly, we apply two modifications to each $\phi \in \Phi$: first, we append the vertices $d^{c_m}, d^{c_{m-1}}, \dots, d^{c_1},t,t'$ to $\phi$  in this order; 
second, we fix the order of the vertices in $\hat{B}$
as $b^{x_n},b^{x_{n-1}}, \dots, b^{x_1}$. It is straightforward to check that the preference list of any vertex in $D^C \cup \{t,t'\}$ can be obtained as the restriction of any $\pi \in \Pi$, while
the preference list of any vertex in $D^X \cup \{s,s'\}$ can be obtained as the restriction of any $\phi \in \Phi$. Moreover, any vertex whose preference list in $\GG$ is the restriction of some $\pi \in \Pi$  or $\phi  \in \Phi$ has the same property in $\hat{G}$ with respect to the modified orders (as defined above). Therefore,  $(\hat{G},\hat{\prec})$ is single-crossing, proving the theorem.
\qed
\end{proof}

\begin{myremark}
\label{remark:altproof}
The \NP-hardness of finding a complete popular matching with at most one blocking edge (a weaker form of Theorem~\ref{thm:k=1-maxsize-bounded} and Corollary~\ref{cor:maxsize}) 
can also be proved  
using the reduction in~\cite[Section 5.1]{CFK+22}.
It can be verified that adding a path $(s,x,y,t)$ to the graph constructed in the reduction presented in \cite[Section 5.1]{CFK+22}, with newly introduced vertices $x$ and $y$ being each other's top choice, and with $s$ ranking $x$, as well as $t$ ranking $y$ as their worst choice, the obtained graph admits a complete popular matching if and only if it admits a complete popular matching with $(x,y)$ as the unique blocking edge, which in turn happens if and only if the input instance of 3-SAT is satisfiable. This argument is due to Telikepalli Kavitha (personal communication);
the reduction we present in the proof of  Theorem~\ref{thm:k=1-maxsize-bounded} is a combination of her ideas and our reduction for Theorem~\ref{thm:k=1-bounded}.
\end{myremark}

\subsection{Missing proofs from Section~\ref{sec:masterlist}}
\label{sec:app-masterlist}

\repeattheorem{repthm_ML_W1hard}

\begin{proof}
We give a reduction from the $\mathsf{W}[1]$-hard \myproblem{Multicolored Clique} problem, parameterized by the size of the solution~\cite{FHRV09}. The input of this problem is a graph $G=(V,E)$ and an integer~$q$, 
with the vertex set of $G$ partitioned into sets~$V_1, \dots, V_q$, and the task is to decide whether $G$ contains a clique of size~$q$ containing exactly one vertex from each of the sets $V_i$. 
We let $E_{i,j}$ denote the edges of $G$ that run between $V_i$ and $V_j$ for some $1 \leq  i < j \leq q$. 

\paragraph{Construction.}
For each $i \in [q]$, we construct a vertex gadget $G'_i$ on vertex set $A_i \cup B_i \cup \{s_i,t_i\}$ where
$A_i=\{a_v: v \in V_i\}$ and
$B_i=\{b_v: v \in V_i\}$. 
For each~$v \in V_i$, 
vertices~$a_v$ and~$b_v$ are each other's top choices, 
the worst choice of~$a_v$ is~$s_i$, and 
the worst choice of~$b_v$ is~$t_i$; 
the gadget contains only these~$3|V_i|$ edges.
Similarly, for each $i,j \in [q]$ with $i <j$, we construct an edge gadget $G'_{i,j}$ on vertex set $A_{i,j} \cup B_{i,j} \cup \{s_{i,j},t_{i,j}\}$ where
$A_{i,j}=\{a_e:e \in E_{i,j}\}$ and
$B_{i,j}=\{b_e:e \in E_{i,j}\}$. 
For each $e \in E_{i,j}$, 
vertices $a_e$ and $b_e$ are each other's top choices, 
the worst choice of $a_e$ is $s_{i,j}$, and 
the worst choice of $b_e$ is $t_{i,j}$;
the gadget contains only these~$3|E_{i,j}|$ edges. We also add vertices~$s_0$ and~$t_0$, connected with each other.

We next create edges that connect our gadgets along a path. 
For this, we need an ordering over our gadgets, so
let $\mu$ be any fixed bijection from $[q+\binom{q}{2}]$ 
to \mbox{$[q] \cup \{(i,j) \in [q] \times [q]: i<j \}$}
satisfying the property that all vertex gadgets precede all edge gadgets (formally, this amounts to  $\mu^{-1}(i)<\mu^{-1}(i',j')$ for any~$i,i',j' \in [q]$ with~$i'<j'$). For simplicity, we also assume $\mu(1)=1$.
We can now add the edge set

\begin{center}
$E_{TS}= \{(t_0,s_1)\} \cup\{(t_{\mu(h)},s_{\mu(h)+1}): h \in [q+\binom{q}{2}-1 ]\}.$
\end{center}

We further create a set~$F$ of \emph{consistency edges} as follows. For each edge~$e \in E$ connecting vertices~$x \in V_i$ and~$y \in V_j$ for some~$i$ and~$j$, we connect vertex~$b_e$ with all vertices in $\{a_v : v \in V_i \setminus \{x\} \} \cup \{a_v : v \in V_j \setminus \{y\} \}$. 

\begin{figure}[th]
\makebox[\textwidth][c]{
\includegraphics[scale=0.8]{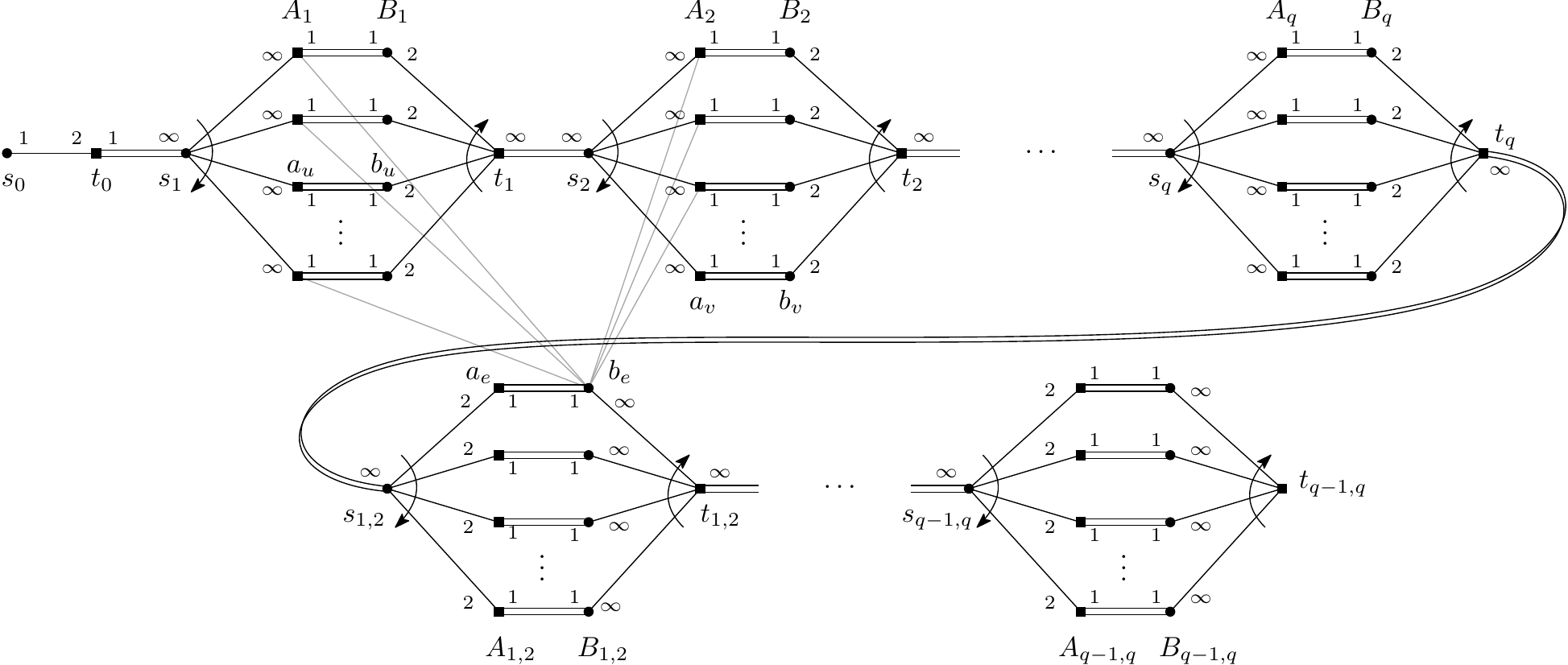}
}
\caption{Illustration of the reduction in the proof of Theorem~\ref{thm:masterlist-W1-hard}. Vertex gadgets are shown in the upper row, with edge gadgets below them. We assume $\mu(i)=i$ for each~$i \in [q]$, as well as $\mu(q+1)=(1,2)$ and $\mu(q+\binom{q}{2})=(q-1,q)$. 
Double lines denote edges of~$M_s$. Among consistency edges only those incident to $b_e$ are shown, in grey.
The figure assumes $e=(u,v) \in E_{1,2}$.
}
\label{fig:2ML-hardness}
\end{figure}

Let $G'$ denote the resulting graph; see Figure~\ref{fig:2ML-hardness} for an illustration.
We set 
\[
A_V = \bigcup_{i=1}^q A_i, \quad
A_E = \bigcup_{1 \leq i<j \leq q} A_{i,j},  \quad
S=\{s_i:i \in [q]\} \cup \{s_{i,j} : 1 \leq i < j \leq q \} 
\]
and we define the sets $B_V$, $B_E$, and $T$ analogously. 
Then $A=A_V \cup A_E \cup S \cup \{s_0\}$ and $B=B_V \cup B_E \cup T \cup \{t_0\}$ are the two partitions of $G'$.
We proceed with providing a master list over vertices in $A$ and in $B$, denoted by $\mathcal{L}_A$ and $\mathcal{L}_B$, respectively.
To this end, we fix an arbitrary ordering~$\pi$ over $V \cup E$. Then we write $\overrightarrow{A_V}$ to denote the ordering of $A_V$ in which $a_v$ precedes $a_{v'}$ if and only if $v$ precedes $v'$ according to~$\pi$. 
By contrast, let $\overleftarrow{B_V}$ to denote 
the ordering of $B_V$ in which $b_v$ precedes $b_{v'}$ for some $v' \neq v$ if and only if $v$ does \emph{not} precede $v'$ according to~$\pi$. 
We define $\overrightarrow{A_E}$ 
and~$\overleftarrow{B_E}$ analogously. 
We are now ready to define the preferences.
$$
\begin{tabular}{ll}
$\mathcal{L}_A$: $\overrightarrow{A_E}$, $\overrightarrow{A_V}$, $S$, $s_0$;\\
$\mathcal{L}_B$: $\overleftarrow{B_V}$, $\overleftarrow{B_E}$, $T$, $t_0$.
\end{tabular}
$$

Observe that $M_s=\{(a_v,b_v):v \in V\} \cup \{(a_e,b_e):e \in E\} \cup E_{TS}$ is the  unique stable matching in $G'$.

All edges have cost~$1$, and our budget is $k=q+\binom{q}{2}$. %
Regarding the utility function $\omega$ and the objective value~$t$, we define two equivalent variants, as required.  To show hardness for variant~(a) we set $\omega$ such that the edge~$(s_0,t_0)$ has utility~1, while all other edges have utility~0, and we set $t=1$. 
Note that $\omega(M_s)=0$ indeed holds in this case.
For variant~(b), we set $\omega\equiv 1$ and $t=|V(G')|/2$;
note that in this case~$t=|M_s|+1$. 
Observe that the presented reduction is
a parameterized reduction with parameter $k$ (in both variants).

We prove that there exists a feasible
popular matching in the constructed instance~$(G',\prec,\omega,c,t,k)$ if and only if $G$ contains a clique of size~$q$ containing a vertex from each set $V_i$.

\paragraph{Direction``$\Rightarrow$''.} Suppose that $M$ is a matching that is popular and feasible for~$(G',\prec,\omega,c,t,k)$. 

First, let us observe that a popular matching~$M$ cannot leave a vertex~$v$  in~$A_V \cup A_E \cup B_V \cup B_E$ unmatched: supposing otherwise we immediately get that $(v,M_s(v))$ blocks $M$, and so $v$ cannot be unmatched due to condition~(c3/i) of Observation~\ref{obs:characterization}, a contradiction.

\medskip
\noindent
\emph{Claim.} ($\blacklozenge$)
If $(s_{\mu(h)},a) \in M$ for some vertex~$a \in A_{\mu(h)}$, then $(M_s(a),t_{\mu(h)}) \in M$ as well, and moreover,  $M(s_{\mu(h+1)}) \in A_{\mu(h+1)}$ unless $h=q(q+1)/2$. 

\medskip

We prove this  first for vertex gadgets, i.e., for $h \in [q]$. 
Suppose $(s_{\mu(h)},a) \in M$ for some $a \in A_{\mu(h)}$ with $h \in[q]$. 
Let $b=M_s(a)$. Since $a$ and $b$ are each other's top choice, $(a,b) \in \bp(M)$ and so by condition~(c3/i) of Observation~\ref{obs:characterization}
we know that $b$ must be matched in~$M$. 
Since $G'_{\mu(h)}$ is a vertex gadget, $\delta_{G'}(b)=2$.
Hence, we get $(b,t_{\mu(h)}) \in M$. 
Using again condition~(c3/i), we also know that $s_{\mu(h+1)}$ must be matched in~$M$,
and since it clearly cannot be matched to~$t_{\mu(h)}$,  claim~($\blacklozenge$) follows in this case. 

Before proving claim~($\blacklozenge$) for edge gadgets, let us show first that $M$ contains no consistency edges.
By the feasibility of~$M$ we know $(s_0,t_0) \in M$, and hence $s_1$ must be matched by~$M$, as otherwise $(t_0,s_1)$ would be a blocking edge of~$M$ with an unmatched endpoint, contradicting condition~(c3/i) of Observation~\ref{obs:characterization}. Thus, $M(s_1) \in A_1$. Moreover, by our choice of~$\mu$, all vertex gadgets precede all edge gadgets, and thus repeatedly applying claim~($\blacklozenge$) for vertex gadgets  (proved in the previous paragraph)  implies that
for each~$i \in [q]$ there exists a unique vertex $v \in V_i$ such that $(s_i,a_v)$ and $(b_v,t_i)$ are both in~$M$, with~$(a_v,b_v)\in \bp(M)$; we denote this vertex~$v$ by~$v_i$. 
Assuming that some consistency edge~$(a_v,b_e) \in F$ is contained in~$M$, this would imply that $b_v$ cannot be matched to any vertex by~$M$ (since $a_v \neq a_{v_i}$ for any~$i \in [q]$ by~$M(a_v) \notin S$), a contradiction.

Suppose now that  $(s_{\mu(h)},a) \in M$ for some $a \in A_{\mu(h)}$ in an edge gadget~$G'_{\mu(h)}$.
Again, let $b=M_s(a)$. Then $b$ must be matched in~$M$ by condition~(c3/i) of Observation~\ref{obs:characterization}.
Note that $\delta_{G'-F}(b)=2$, and hence from $M \cap F = \emptyset$ we get~$(b,t_{\mu(h)}) \in M$.
Thus, if $s_{\mu(h+1)}$ exists, then it must also be matched in~$M$ by condition~(c3/i), and can only be matched to some vertex in~$A_{\mu(h+1)}$, proving claim ($\blacklozenge$).
Therefore, for any~$i,j \in [q]$ with~$ i<j$ there exists a unique edge~$e \in E_{i,j}$ such that $(s_{i,j},a_e)$ and~$(b_e,t_{i,j})$ are both in~$M$, with~$(a_e,b_e) \in \bp(M)$; let $e_{i,j}$ denote this edge~$e$.

As a consequence, each gadget contains an edge of $M_s$ that blocks~$M$. By $k=q+\binom{q}{2}$ we get that
no consistency edge can block $M$ without exceeding the budget. 
We claim that each edge $e_{i,j}$ connects~$v_i$ and~$v_j$. 
Indeed, if $e_{i,j}$ is not incident to $v_i$, then $a_{v_i}$ and $b_{e_{i,j}}$ are connected in $G'$ by a consistency edge, 
and they form a blocking edge for $M$ (as both of them are matched to their least favorite neighbor in $M$), a contradiction. By symmetry, $e_{i,j}$ is incident to~$v_j$ as well, proving our claim. Therefore, vertices~$v_1, \dots, v_q$ form a clique in~$G$, with~$v_i \in V_i$ for each~$i \in [q]$ as desired. 

\paragraph{Direction ``$\Leftarrow$''.} Suppose that vertices $v_i \in V_i$, $i \in [q]$, form a clique in~$G$; let $K=G[\{v_1,\dots,v_q]\}$. We define a matching~$M$ as follows.
\begin{eqnarray*}
M&=&\{(s_0,t_0)\} \cup \{(a_v,b_v):v \in V \setminus V(K)\} 
\cup  \{(s_i,a_{v_i}),(b_{v_i},t_i): i \in [q]\} \\ & &
\cup 
\{(a_e,b_e):e \in E \setminus E(K)\}  \cup 
\{(s_{i,j},a_{(v_i,v_j)}),(b_{(v_i,v_j)},t_{i,j}): i,j \in [q],i<j\}.
\end{eqnarray*}
Observe that $M$ is a complete matching, and the set of edges blocking $M$ is $$\bp(M)=\{(a_{v_i},b_{v_i}):i \in [q]\} \cup 
\{(a_{(v_i,v_j)},b_{(v_i,v_j)}):i,j \in [q],i<j\},$$
since all vertices in $A_V \cup B_V \cup A_E \cup B_E$ except those in $V(\bp(M))$ are matched to their top choice in $M$; note that since $K$ is a clique, the definition of $G'$ implies that no consistency edge has both of its endpoints in  $V(\bp(M))$, and therefore no consistency edge blocks $M$. This shows that the edges blocking $M$ have total cost exactly $k=|\bp(M)|$, and thus $M$ is feasible. It remains to show that $M$ is popular in $G'$. 

For the sake of contradiction, suppose that $M'$ is more popular than~$M$. 
Note that $s_0$ obtains her top choice in~$M$, and $t_0$ can only be better off in~$M'$ if $s_0$ is worse off in~$M'$. 
Hence, as all remaining vertices in~$G'$ belong to a vertex or an edge gadget,  
there must exist a vertex or an edge gadget in which $M'$ beats~$M$, i.e., where more vertices prefer~$M'$ to~$M$ than vice versa. 
Suppose that $G'_i$ is such a gadget; the same argument works for edge gadgets. 
Note that the only vertices that do not (necessarily) obtain their top choice in $M$ are the endpoints of the edges $(s_i,a_{v_i})$ and $(b_{v_i},t_i)$, both in~$M$. Thus, either $s_i$ prefers $M'$ to $M$, or $t_i$ prefers $M'$ to~$M$, or both.
Furthermore, it also follows that at most four vertices in $G'_i$ can be better off in $M'$ when compared to~$M$, and so at most 
three
vertices of $G'_i$ may be better off in $M$ when compared to~$M'$. 
We distinguish between two cases: 
\begin{itemize} 
    \item Case A: both $s_i$ and $t_i$ prefer $M'$ to $M$. Then $M'(s_i)=a_x$ and $M'(t_i)=b_y$ for some $x$ and $y$ with $v_i \notin \{x,y\}$. 
    Since $a_{v_i} \prec_{s_i} a_x$, we get that $x$ precedes $v_i$ in $\pi$. By contrast, since $b_{v_i} \prec_{t_i} b_y$ we get that $y$ does \emph{not} precede $v_i$ in $\pi$. Hence, $x \neq y$.
    This implies that vertices $a_x$, $b_x$, $a_y$, and $b_y$ are four vertices that prefer $M$ to $M'$, a contradiction. 
    \item Case B: exactly one of $s_i$ and $t_i$ prefer $M'$ to~$M$, say $s_i$ (the case for $t_i$ is symmetric). Let $M'(s_i)=a_x$ for some $x \neq v_i$.
    If $M'(t_i)=M(t_i)$, then 
    the only vertices preferring $M'$ to $M$ can be $s_i$ and $a_{v_i}$, but both $a_x$ and $b_x$ prefer $M$ to $M'$, so $M'$ cannot beat $M$ in $G'_i$, a contradiction.  
    Otherwise, $M'(t_i) \neq M(t_i)$ and by our assumption, $t_i$ prefers $M$ to $M'$.
    Then $a_x$, $b_x$, and $t_i$ are three vertices in~$G'_i$ who prefer~$M$ to~$M'$, while at most three vertices in~$G'_i$ (namely, $s_i$, $a_{v_i}$, and $b_{v_i}$) may prefer~$M'$ to~$M$, a contradiction.
\end{itemize}
This proves that $M$ is indeed popular, as required, finishing our proof. 
\qed
\end{proof}

\repeattheorem{repthm_sociallystable}

\begin{proof}
We prove the theorem by slightly modifying the reduction from \myproblem{Multicolored Clique} given in the proof of Theorem~\ref{thm:masterlist-W1-hard}; we re-use all definitions from that proof. 
Recall the instance $I=(G',\prec,\omega,c,t,k=\binom{q}{2}+q)$ of \MUPMU{} constructed there. 
We define a modified instance $I'=(G',\prec,\omega,c',t,k'=0)$ by setting $c'(e)=1$ if $e$ is a consistency edge, and setting $c'(e)=0$ otherwise. 
Note that our modified reduction is a polynomial-time reduction.

We claim that there exists a feasible popular matching in~$I'$ if and only if the input graph~$G$ contains a clique of size~$q$. 

\paragraph{Direction ``$\Rightarrow$''.} Suppose that $M$ is a popular matching feasible for $I'$. The same arguments presented in the proof of Theorem~\ref{thm:masterlist-W1-hard} can be applied to define a clique in $G$, except for the argument to prove that $M$ cannot be blocked by any consistency edge---the only place where properties of the original cost function~$c$ and budget~$k$ were used. In the case of $I'$, however, this follows immediately from the fact that $c'(e)=1$ for any consistency edge~$e$ and our budget is $k'=0$. 

\paragraph{Direction ``$\Leftarrow$''.} Given a clique in~$G$, the corresponding matching~$M$ defined as in the proof of Theorem~\ref{thm:masterlist-W1-hard} has the property that $\bp(M) \cap F = \emptyset$, so the cost of $M$ with respect to the modified cost function~$c'$ is zero. Hence, $M$ is feasible in $I'$ as well.
\qed
\end{proof}

\repeattheorem{repthm_MLties}

\begin{proof}
Again, we are going to present a modification of the reduction given in the proof of Theorem~\ref{thm:k=1-bounded}, re-using all definitions there. One can observe that there is a simple reason why preferences in the  graph $\GG$ created in that reduction do not admit a master list (on either side): vertices  within a cycle $C^{c_j,\ell}$, $j \in [m]$ and $\ell \in [3]$, as well as within a cycle $\CC^{x_i,\lambda}$, $i \in [n]$ and $\lambda \in \{\texttt{t,f}\}$, are cyclic (in the sense that each vertex on such a cycle prefers the ``next'' vertex along the cycle to the ``previous'' vertex on the cycle, when traversing the cycle in one direction), and hence do not admit a master list on either side. We now present a modification that circumvents this problem with the help of ties, by simply replacing an edge in each of these cycles with a path of length~5 (i.e., making each of them longer by four edges). 

\paragraph{Construction.}
For each $j \in [m]$ and $\ell \in [3]$ we subdivide the edge $(a_1^{c_j,\ell},b_0^{c_j,\ell})$ of $\GG$ with newly introduced vertices $b_2^{c_j,\ell}$, $a_2^{c_j,\ell}$, $b_3^{c_j,\ell}$, and $a_3^{c_j,\ell}$, in this order. 
Similarly, for each $i \in [n]$ and $\lambda \in \{\TT,\FF\}$ we subdivide the edge $(b_3^{x_i,\lambda},a_0^{x_i,\lambda})$ with newly introduced vertices $a_4^{x_i,\lambda}$, $b_4^{x_i,\lambda}$, $a_5^{x_i,\lambda}$, and $b_5^{x_i,\lambda}$.
This way, for any cycle $C$ of the form~$\CC^{x_i,\lambda}$ or~$C^{c_j,\ell}$ in~$\GG$, we have created a corresponding cycle~$C^{\textup{ties}}$ in the obtained graph $\GG^{\textup{ties}}$. We will call such cycles \emph{base cycles} (both in~$\GG$ and in~$\GG^{\textup{ties}}$). 

We give the  preferences of the newly introduced vertices and their neighbors below (see Figure~\ref{fig:masterlist-ties}); all other preferences remain unchanged. Ties are denoted by angle brackets $\langle \cdot \rangle$.

\begin{longtable}{ll}
$a_1^{c_j,\ell}$: $\langle b_1^{c_j,\ell},b_2^{c_j,\ell} \rangle$ & where $j \in [m]$ and $\ell \in[3]$; \\
$b_2^{c_j,\ell}$: $\langle a_1^{c_j,\ell},a_2^{c_j,\ell} \rangle$ & where $j \in [m]$ and $\ell \in[3]$; \\
$a_2^{c_j,\ell}$: $b_2^{c_j,\ell},b_3^{c_j,\ell}$ & where $j \in [m]$ and $\ell \in[3]$; \\
$b_3^{c_j,\ell}$: $a_2^{c_j,\ell},a_3^{c_j,\ell}$ & where $j \in [m]$ and $\ell \in[3]$; \\
$a_3^{c_j,\ell}$: $\langle b_0^{c_j,\ell},b_3^{c_j,\ell} \rangle$ & where $j \in [m]$ and $\ell \in[3]$; \\
$b_0^{c_j,\ell}$: $\langle a_0^{c_j,\ell},a_3^{c_j,\ell} \rangle ,a_0^{c_j,\ell+1}$ & where $j \in [m]$ and $\ell \in[2]$;  \\
$b_0^{c_j,3}$: $\langle a_0^{c_j,3},a_3^{c_j,3} \rangle, a^{c_j}$ & where $j \in [m]$;  \\
$b_3^{x_i,\lambda}$: $\langle a_3^{x_i,\lambda},a_4^{x_i,\lambda} \rangle$ & where $i \in [n]$ and $\lambda \in \{\TT,\FF\}$; \\
$a_4^{x_i,\lambda}$: $\langle b_3^{x_i,\lambda},b_4^{x_i,\lambda} \rangle$ $\qquad \qquad$ & where $i \in [n]$ and $\lambda \in \{\TT,\FF\}$;  \\
$b_4^{x_i,\lambda}$: $a_4^{x_i,\lambda},a_5^{x_i,\lambda}$ & where $i \in [n]$ and $\lambda \in \{\TT,\FF\}$; \\
$a_5^{x_i,\lambda}$: $b_4^{x_i,\lambda},b_5^{x_i,\lambda}$ & where $i \in [n]$ and $\lambda \in \{\TT,\FF\}$; \\
$b_5^{x_i,\lambda}$: $\langle a_0^{x_i,\lambda},a_5^{x_i,\lambda} \rangle$ & where $i \in [n]$ and $\lambda \in \{\TT,\FF\}$; \\
$a_0^{x_i,\TT}$: $\langle b_0^{x_i,\TT},b_5^{x_i,\TT} \rangle, b_0^{x_i,\FF}$ & where $i \in [n]$; \\
$a_0^{x_i,\FF}$: $\langle b_0^{x_i,\FF},b_5^{x_i,\FF} \rangle, b^{x_i}$ & where $i \in [n]$.
\end{longtable}

\begin{figure}[th]
\makebox[\textwidth][c]{
\includegraphics[width=\textwidth]{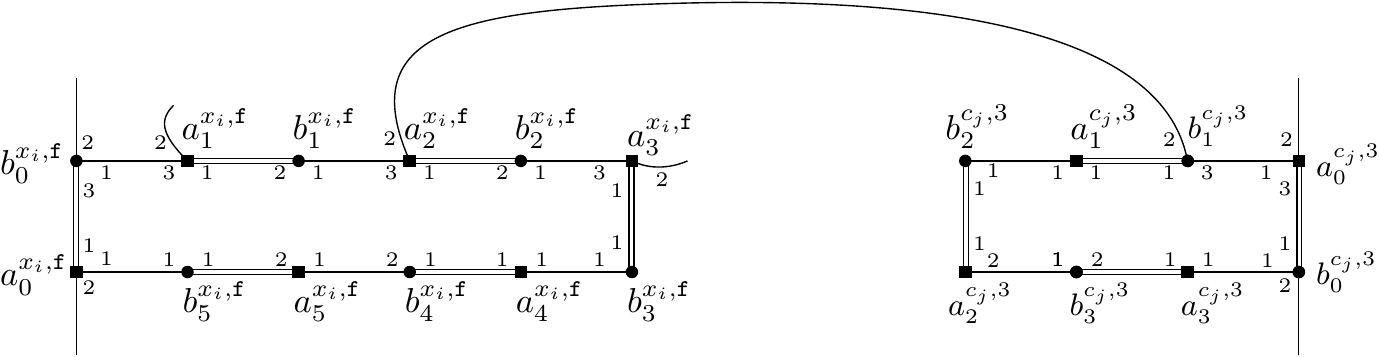}
}
\caption{Illustration of the reduction in the proof of Theorem~\ref{thm:masterlists-ties}, depicting how the graph $\GG$ is modified to obtain graph $\GG^{\textup{ties}}$. 
The figure assumes that the third literal in~$c_j$ is $x_i$ as a positive literal.
Edges leaving a given vertex~$a$ that are labelled with the same rank lead to vertices that are equally preferred by~$a$ and hence are contained in a tie in $a$'s preference list.
}
\label{fig:masterlist-ties}
\end{figure}

We keep all remaining parameters for both problems unchanged: $S=\{e^\star\}$ for \FUP{}, and for \MUPMU{} we set $c$, $k$, $\omega$ and $t$ as required in the statement of the theorem, with $f^\star=(u,v'_1)$ being the only edge with utility~1.

\paragraph{Further definitions.}
We extend the matching $\MM_0$  in the straightforward way into a matching~$\MM_0^{\textup{ties}}$:
for any $i \in [n]$ and $\lambda \in \{\TT,\FF\}$, we let $\MM_0^{\textup{ties}}(a_h^{x_i,\lambda})=b_h^{x_i,\lambda}$ for each \mbox{$h \in \{0, 1, \dots, 5\}$}, and similarly,
for each $j \in [m]$ and $\ell \in [3]$, we let \mbox{$\MM_0^{\textup{ties}}(a_h^{c_j,\ell})=b_h^{c_j,\ell}$} for each \mbox{$h \in \{0, 1, 2, 3\}$}; we set $\MM_0^{\textup{ties}}(a)=\MM_0(a)$ for each remaining vertex~$a$. 

Recall from the proof of Theorem~\ref{thm:k=1-bounded} that any popular matching $\MM$ in $\GG$ with 
$\bp(\MM)=\{e^\star\}$ has the following property: 
for any base cycle~$C$
either $\MM$ contains $C \cap M_0$, 
or $\MM$ contains $C \setminus M_0$.
Based on this fact, we can define a matching~$\MM^{\textup{ties}}$ in~$\GG^{\textup{ties}}$ corresponding to~$\MM$ as follows:
for each base cycle~$C$ in~$\GG$, 
if $M \supseteq \MM_0 \cap C$, then we add the edges $C^{\textup{ties}} \cap \MM_0^{\textup{ties}}$ to~$\MM^{\textup{ties}}$, otherwise we add the edges 
$C^{\textup{ties}} \setminus \MM_0^{\textup{ties}}$;
we further add all edges of~$\MM_0^{\textup{ties}}$ not incident to any base cycle to~$\MM^{\textup{ties}}$. 

\paragraph{Correctness.}
It is easy to see that if $\MM$ is a popular matching in $\GG$ with $\bp(\MM)=\{e^\star\}$, then $\MM^{\textup{ties}}$ is a popular matching in $\GG^{\textup{ties}}$ with $\bp(\MM^{\textup{ties}})=\{e^\star\}$. We are going to prove that the converse is true as well, i.e., any popular matching in $\GG^{\textup{ties}}$ blocked only by $e^\star$ is of the form $\MM^{\textup{ties}}$ for some popular matching~$\MM$ in~$\GG$ with $\bp(\MM)=\{e^\star\}$. 
Notice that this suffices to prove  that our modified reduction is correct (for both problems).

So let $M$ be a popular matching in $M$ blocked only by $e^\star$. It is clear that $(v,u'_1) \in M$. Using for $i=1, 2, \dots, n-1$ that the edge $(v_i,u'_{i+1})$ cannot block~$M$, we get that $(v_i,u'_{i+1}) \in M$ for each $i \in [n-1]$. 
In particular, no edge $(u'_i,b_0^{x_i,\TT})$ is in~$M$. 
This implies that $(a_0^{x_i,\TT},b_0^{x_i,\FF}) \notin M$:
indeed, $(a_0^{x_i,\TT},b_0^{x_i,\FF}) \in M$ would imply $(a_5^{x_i,\TT},b_5^{x_i,\TT}) \in M$ since $(a_0^{x_i,\TT},b_5^{x_i,\TT})$ cannot block $M$, 
and applying the same argument iteratively  it follows that 
$(a_h^{x_i,\TT},b_h^{x_i,\TT}) \in M$ for each $h \in [5]$. That would, however, leave $b_0^{x_i,\TT}$ unmatched in $M$ and hence $(b_0^{x_i,\TT},a_0^{x_i,\TT})$ would block~$M$, a contradiction. Therefore we know $(a_0^{x_i,\TT},b_0^{x_i,\FF}) \notin M$.
Using the same argument once more, we also get that $(a_0^{x_i,\FF},b^{x_i}) \notin M$. 

Note that analogous arguments can be used to show that $M$ contains $(u,v'_1)$ and all edges $(u_j,v'_{j+1})$ for $j \in [m-1]$, and that $M$ does not match any vertex of the form $b_0^{c_j,\ell}$ to its worst choice.
To prove our claim, it therefore remains to show that $M$ cannot contain any consistency edge.

\begin{figure}[th]
\makebox[\textwidth][c]{
\includegraphics[width=\textwidth]{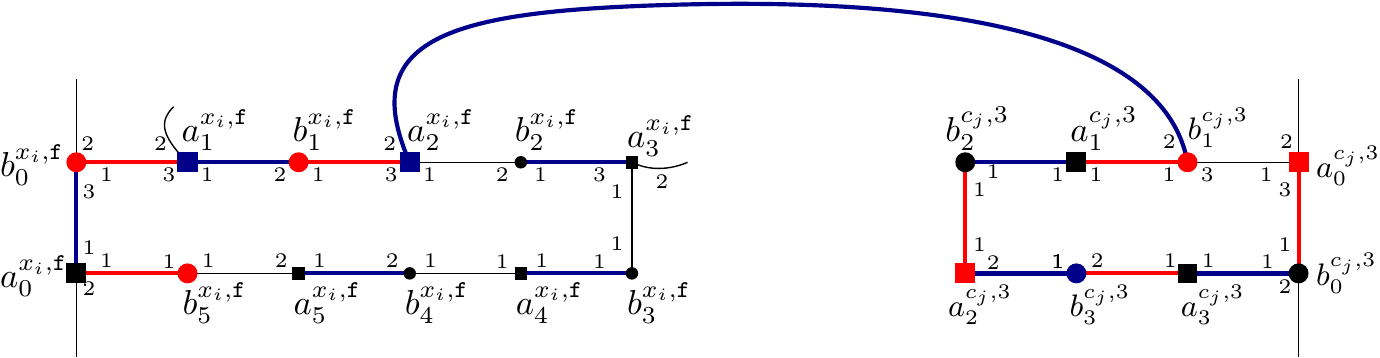}
}
\caption{Illustration of the matching $M'$ in the proof of Theorem~\ref{thm:masterlists-ties} in the case where the matching~$M$ (depicted in blue) contains the consistency edge $(a_2^{x_i,\FF},b_1^{c_j,3})$. 
The figure assumes $(a_0^{x_i,\FF},b_0^{x_i,\FF}) \in M$  but \mbox{$(a_0^{c_j,3},b_0^{c_j,3}) \notin M$}.
Edges of~$M' \setminus M$ are depicted in red.
Vertices in $V(M \triangle M')$ are emphasized by their somewhat greater size. Furthermore, vertices that prefer~$M'$ to~$M$ are shown in red, while those preferring~$M$ to~$M'$ are blue; indifferent vertices are black.
}
\label{fig:masterlist-ties-claim}
\end{figure}

Assume for the sake of contradiction that $M$ contains some consistency edge~$(a_h^{x_i,\lambda},b_1^{c_j,\ell,})$. 
We are going to define a matching $M'$ so that $M'$ is more popular than~$M$. 
Since $(a_h^{x_i,\lambda},b_h^{x_i,\lambda})$ cannot block~$M$, we know $(a_{h+1}^{x_i,\lambda},b_h^{x_i,\lambda}) \in M$, and applying the same argument iteratively we obtain that $(a_{h'+1}^{x_i,\lambda},b_{h'}^{x_i,\lambda}) \in M$ for each $h'=h,h+1, \dots, 3$.
We also get $(a_5^{x_i,\lambda},b_4^{x_i,\lambda}) \in M$, 
as otherwise
$(a_5^{x_i,\lambda},b_4^{x_i,\lambda})$ would block~$M$.
By contrast, if $h>1$, then since $(a_{h-1}^{x_i,\lambda},b_{h-1}^{x_i,\lambda})$ cannot block~$M$, it must be contained in~$M$, and again arguing iteratively we obtain that
$(a_{h'}^{x_i,\lambda},b_{h'}^{x_i,\lambda}) \in M$ for each $h' \in [h-1]$. Thus, $M$ assigns either~$b_0^{x_i,\lambda}$ or~$b_5^{x_i,\lambda}$ to~$a_0^{x_i,\lambda}$, 
leaving the other vertex unmatched. 
Now, we define $M'$ to contain the edges~$\{(a_h^{x_i,\lambda},b_h^{x_i,\lambda}):h \in \{0,1,\dots,5\}\}$
if $(a_0^{x_i,\lambda},b_0^{x_i,\lambda}) \notin M$ and we define $M'$
to contain the edges~$\{(a_{h+1 \! \! \mod 6}^{x_i,\lambda},b_h^{x_i,\lambda}):h \in \{0,1,\dots,5\}\}$
otherwise. We use the same reasoning for the cycle $C^{c_j,\ell}$ and define $M'$ analogously on the vertices of~$C^{c_j,\ell}$. For each remaining vertex~$a$ we let $M'(a)=M(a)$. 
It is not hard to see by simply checking all vertices in~$V(M \triangle M')$ that there are at least two more vertices preferring $M'$ to $M$ than vice versa (see  Figure~\ref{fig:masterlist-ties-claim}), showing that $M'$ is indeed more popular than $M$, a contradiction.

\paragraph{Master lists.}
To finish the proof of our theorem, it remains to show that the reduction has the promised properties, in particular that preferences on both sides admit a master list. 
To this end, for $h=1,2,3$ we define $I_{h}$ as the set of those pairs $(c_j,\ell)$ where the $\ell$-th literal in clause $c_j$ is the $h$-th occurrence of some variable in the input formula.
We define the two master lists as follows.

\begin{longtable}{ll}
$\mathcal{L}_A$: & 
$u, \big\{ \big\langle a_1^{c_j,\ell}, a_2^{c_j,\ell} \big\rangle \big\}_{j \in [m],\ell \in [3]},
\big\{ 
\big\langle a_3^{x_i,\lambda}, a_4^{x_i,\lambda}  
\big\rangle 
\big\}_{i \in [n],\lambda \in \{\TT,\FF \}}$,\\[10pt]
& 
$
\big\{a_2^{x_i,\lambda}\}_{i \in [n],\lambda \in \{\TT,\FF \}} ,
 \{a_1^{x_i,\lambda}\}_{i \in [n],\lambda \in \{\TT,\FF \}} ,u'_1,u'_2,\dots, u'_n,
$ \\[6pt]
& $\big\langle a_0^{c_1,1}, a_3^{c_1,1} \big\rangle, u_1, 
\big\langle a_0^{c_2,1}, a_3^{c_2,1} \big\rangle, u_2, \dots, 
u_{m-1},
\big\langle a_0^{c_m,1}, a_3^{c_m,1} \big\rangle,$ \\[6pt]
&
$
\big\{ \big\langle a_0^{c_j,2},a_3^{c_j,2} \big\rangle 
\big\}_{j \in [m]},
\big\{ \big\langle a_0^{c_j,3},a_3^{c_j,3} \big\rangle \big\}_{j \in [m]},
 \big\{a^{c_j}\big\}_{j \in [m]},
$ \\[6pt]
& $
\big\{ \big\langle a_0^{x_i,\TT},a_5^{x_i,\TT} \big\rangle \big\}_{i \in [n]},
\big\{ \big\langle a_0^{x_i,\FF},a_5^{x_i,\FF} \big\rangle \big\}_{i \in [n]}.$ 
\end{longtable}


\begin{longtable}{ll}
$\mathcal{L}_B$:& $v,
\big\{ \big\langle  b_3^{x_i,\lambda}, b_4^{x_i,\lambda} \big\rangle \big\}_{i \in [n],\lambda \in \{\TT,\FF \} },
\big\{ \big\langle b_1^{c_j,\ell}, b_2^{c_j,\ell} \big\rangle \big\}_{(c_j,\ell) \in I_3 }, 
\big\{b_2^{x_i,\lambda} \big\}_{i \in [n],\lambda \in \{\TT,\FF\}}$, \\[6pt]
& 
$\big\{ \big\langle  b_1^{c_j,\ell}, b_2^{c_j,\ell} \big\rangle \big\}_{(c_j,\ell) \in I_2 }, 
\big\{b_1^{x_i,\lambda} \big\}_{i \in [n],\lambda \in \{\TT,\FF\}},
\big\{ \big\langle  b_1^{c_j,\ell}, b_2^{c_j,\ell} \big\rangle \big\}_{(c_j,\ell) \in I_1 }, 
$
\\[6pt]
& $
\big\langle b_0^{x_1,\TT},b_5^{x_1,\TT} \big\rangle,v_1,
\big\langle b_0^{x_2,\TT},b_5^{x_2,\TT} \big\rangle,v_2,
\dots, 
v_{n-1},
\big\langle b_0^{x_n,\TT},b_5^{x_n,\TT} \big\rangle, $
\\[6pt] 
& $
\big\{ \big\langle b_0^{x_i,\FF},b_5^{x_i,\FF} \big\rangle \big\}_{i \in [n]},
\big\{b^{x_i}\big\}_{i \in [n]} , v'_1,v'_2, \dots, v'_m,$ 
\\[6pt]
&
$
\big\{ \big\langle b_0^{c_j,1},b_3^{c_j,1} \big\rangle \big\}_{j \in [m]},
\big\{ \big\langle b_0^{c_j,2},b_3^{c_j,2} \big\rangle \big\}_{j \in [m]},
\big\{ \big\langle b_0^{c_j,3},b_3^{c_j,3} \big\rangle \big\}_{j \in [m]}.
$ \\[6pt]
\end{longtable}

It is not hard (though tedious) to verify that  preferences in the constructed instance indeed admit the master lists given above. 
\qed
\end{proof}

\subsection{Missing proofs and details from Section~\ref{sec:pareto}}
\label{sec:app-pareto}

We start with an example showing that  a Pareto-optimal matching~$M$ in a preference system~$(G,\prec)$ may not be Pareto-optimal in $(G,\prec)-\bp_G(M)$.

\begin{myexample}
\label{example:PO}
%
Instead of a formal definition, we set the structure and preferences of our example as depicted in Figure~\ref{fig:example-PO}. Assume $k=2$ and that we are looking for a matching that is blocked by exactly the edges contained in $S=\{(a_1,b_2),(a_2,b_1)\}$. Assume that edges in $S$ have utility~1, while all other edges have utility~2. 
There are two stable matchings in $G-S$, both of them having the same utility.
Observe that the matching $M_a=\{(a_i,b_i):i \in [3]\}$ highlighted in part~(a) of Figure~\ref{fig:example-PO} 
is Pareto-optimal in~$G-S$; 
however, it is not Pareto-optimal in~$G$: switching edges along the cycle $(a_1,b_1,a_2,b_2)$ yields a 
matching that is a Pareto-improvement for~$M_a$, but has less utility than~$M_a$. 
By contrast, the matching $M_b=\{(a_1,b_1),(a_2,b_3),(a_3,b_2)\}$ highlighted in part~(b) of Figure~\ref{fig:example-PO} 
is Pareto-optimal in $G-S$ as well as in $G$. 
\end{myexample}

\begin{figure}[th]
\makebox[\textwidth][c]{
\includegraphics[scale=0.8]{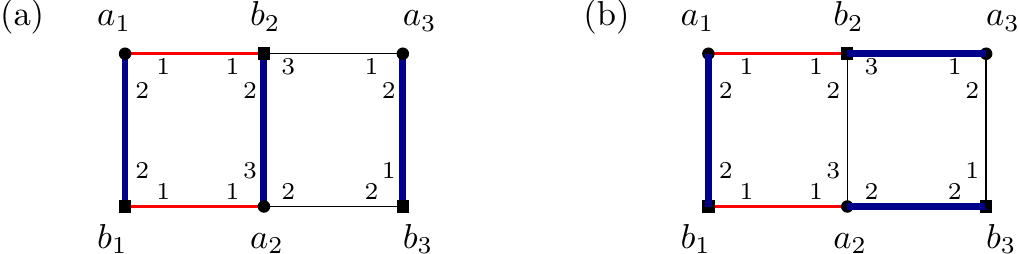}
}
\caption{An example illustrating why the simplest approach for finding a Pareto-optimal matching with at most $k$ blocking edges and maximal utility does not work. 
}
\label{fig:example-PO}
\end{figure}

\repeattheorem{repthm_PO_W1hard}
\begin{proof}
Recall the reduction presented in the proof Theorem~\ref{thm:masterlist-W1-hard}: given an instance~$(G=(V,E),q)$ of
 \myproblem{Multicolored Clique} with $V$ partitioned into $q$ sets $V_1, \dots, V_q$, we 
constructed an equivalent instance $I=(G',\prec,\omega,c,t,k=q+\binom{q}{2})$ of \MUPMU{}. 
We claim that this construction is correct here as well
in the sense that $(G',\prec)$ admits a matching~$M$ of size $|M_s|+1=|V(G')|/2$ and with $\bp_{G'}(M)\leq k$ if and only if $G$ has a clique $v_1, \dots, v_q$ in $G$ with $v_i \in V_i$ for each $i \in [q]$. 

On the one hand, given such a clique in $G$, the corresponding matching $M$ defined in the proof of Theorem~\ref{thm:masterlist-W1-hard} 
is clearly a matching that is larger than the unique stable matching~$M_s$ in $(G',\prec)$ and has exactly~$k$ blocking pairs.

On the other hand, let $M$ be a matching in $(G',\prec)$ that is larger than $M_s$ and has at most~$k$ blocking edges. Note that $M$
is feasible in instance $I$ of \MUPMU{}, but we do not know whether it is popular in $(G',\prec)$ or not.
Nevertheless, observe that in the proof of Theorem~\ref{thm:masterlist-W1-hard} we only use the popularity of~$M$
in order to show that certain vertices must be matched by~$M$. This is, however, obvious now, since $M$ is a complete matching. 
Therefore,  we can apply the same arguments we used in the proof of Theorem~\ref{thm:masterlist-W1-hard}, 
making use of the completeness of~$M$ instead of its popularity, to obtain a clique in~$G$ as required. 

Thus, the reduction is indeed correct and proves the theorem.
\qed
\end{proof}

\repeatcorollary{repcor_PO}

\begin{proof}
The corollary is an immediate consequence of Theorem~\ref{thm:t1-instability} and the following observation: $G$ contains a matching of size~$t$ with at most~$k$ blocking edges if and only if $G$ contains a Pareto-optimal matching of size~$t$ with at most~$k$ blocking edges. To see the non-trivial direction of this observation, consider any matching $M$ of size~$t$ with $|\bp(M)|\leq k$. By applying Pareto-improvements to~$M$ as long as possible, we obtain a Pareto-optimal matching $M'$ where no vertex is worse off than in~$M$. 
This implies that 
an edge that blocks $M'$ also blocks $M$ 
(note that no edge in $M \setminus M'$ blocks $M'$). 
It follows that $|\bp_G(M')|\leq k$, and
it is also clear that $|M'|\geq |M| =t$, proving the observation.
\qed
\end{proof}

\end{document}